\newcommand{\descr}[1]{\vspace{0.2cm} \noindent \textbf{#1}}
\definecolor{citegreen}{HTML}{458B00}
\newcommand{\sample}[1]{\mathrel{{\leftarrow}\vcenter{\hbox{\scriptsize\rmfamily\upshape\ensuremath{{#1}}}}}}
\newtheorem{proposition}{Proposition}[section]
\newtheorem{remark}{Remark}[section]
\newcolumntype{x}[1]{>{\centering\arraybackslash\hspace{0pt}}p{#1}}
\newcolumntype{R}[2]{%
    >{\adjustbox{angle=#1,lap=\width-(#2)}\bgroup}%
    l%
    <{\egroup}%
}
\newcommand{\blue}[1]{{#1}}
\begin{document}

\title{On the Resilience of Biometric Authentication Systems against Random Inputs}



\author{\IEEEauthorblockN{Benjamin Zi Hao Zhao}
\IEEEauthorblockA{University of New South Wales\\and Data61 CSIRO\\
benjamin.zhao@unsw.edu.au}
\and
\IEEEauthorblockN{Hassan Jameel Asghar}
\IEEEauthorblockA{Macquarie University\\ and Data61 CSIRO\\
hassan.asghar@mq.edu.au}
\and
\IEEEauthorblockN{Mohamed Ali Kaafar}
\IEEEauthorblockA{Macquarie University\\ and Data61 CSIRO\\
dali.kaafar@mq.edu.au}}





\IEEEoverridecommandlockouts
\makeatletter\def\@IEEEpubidpullup{6.5\baselineskip}\makeatother
\IEEEpubid{\parbox{\columnwidth}{
    Network and Distributed Systems Security (NDSS) Symposium 2020\\
    23-26 February 2020, San Diego, CA, USA\\
    ISBN 1-891562-61-4\\
    https://dx.doi.org/10.14722/ndss.2020.24210\\
    www.ndss-symposium.org
}
\hspace{\columnsep}\makebox[\columnwidth]{}}

\maketitle

\begin{abstract}
We assess the security of machine learning based biometric authentication systems against an attacker who submits uniform random inputs, either as feature vectors or raw inputs, in order to find an \emph{accepting sample} of a target user. The average false positive rate (FPR) of the system, i.e., the rate at which an impostor is incorrectly accepted as the legitimate user, may be interpreted as a measure of the success probability of such an attack. However, we show that the success rate is often higher than the FPR. In particular, for one reconstructed biometric system with an average FPR of 0.03, the success rate was as high as 0.78. This has implications for the security of the system, as an attacker with only the knowledge of the length of the feature space can impersonate the user with less than 2 attempts on average. We provide detailed analysis of why the attack is successful, and validate our results using four different biometric modalities and four different machine learning classifiers. Finally, we propose mitigation techniques that render such attacks ineffective, with little to no effect on the accuracy of the system.  
\end{abstract}




\section{Introduction}
Consider a machine learning model trained on some user's data accessible as a black-box API for biometric authentication. Given an input (a biometric sample), the model outputs a binary decision, i.e., accept or reject, as its prediction for whether the input belongs to the target user or not. Now imagine an attacker with access to the same API who has never observed the target user's inputs. The goal of the attacker is to impersonate the user by finding an \emph{accepting sample} (input). What is the success probability of such an attacker? 

Biometric authentication systems are generally based on either physiological biometrics such as fingerprints~\cite{maltoni2009handbook}, face~\cite{yi2014learning, schroff2015facenet}, and voice~\cite{nagrani2017voxceleb, chung2018voxceleb2}), or behavioral biometrics such as touch~\cite{mahbub2016active} and gait~\cite{xu2017keh}, the latter category generally used for continuous and implicit authentication of users. These systems are mostly based on machine learning: a binary classifier is trained on the target user's data (positive class) and a subset of data from other users (negative class). This process is used to validate the performance of the machine learning classifier and hence the biometric system~\cite{chauhan2017behaviocog, xu2017keh, curran2017one, huang2018breathlive, liu2018vocal, chen2017your, song2016eyeveri, chauhan2017breathprint, ho2017mini, crawford2017authentication}. The resulting proportion of negative samples (other users' data) successfully gaining access (when they should have been rejected) produces the false positive rate (FPR, also referred as False Acceptance Rate). The target user's model is also verified for their own samples, establishing the false reject rate (FRR). The parameters of the model can be adjusted to obtain the equal error rate (EER) at which point the FPR equals FRR.

Returning to our question, the FPR seems to be a good indicator of the success probability of finding an accepting sample. However, this implicitly assumes that the adversary is a human who submits samples using the same human computer interface as other users, e.g., a smartphone camera in case of face recognition.
When the model is accessible via an API the adversary has more freedom in choosing its probing samples. This may happen when the biometric service is hosted on the cloud (online setting) or within a secure enclave on the user's device (local setting). In particular, the attacker is free to sample uniform random inputs. It has previously been stated that the success probability of such an attack is exponentially small~\cite{pagnin2014leakage} or it can be derived from the FPR of the system~\cite{martinez-hill, vulnerable-face}.\footnote{We note that these observations are made for \emph{distance-based} authentication algorithms and not machine-learning model based algorithms. See Sections \ref{sec:distance_classifier} and \ref{sec:related} for more details.}

In this paper, we show that uniform random inputs are accepted by biometric systems with a probability that is often higher and independent of the FPR. Moreover, this applies to the setting where the API to the biometric system can be queried using feature vectors after processing raw input as well as at the raw input level. A simple toy example with a single feature can illustrate the reason for the efficacy of the attack. Suppose the feature is normalized within the interval $[0, 1]$. All of target user's samples (the positive class) lie in the interval $[0, 0.5)$ and the other users' samples (the negative class) lie in the interval $(0.5, 1]$. A ``classifier'' decides the decision boundary of $0.5$, resulting in identically zero FRR and FPR. However, a random sample has a 50\% chance of being accepted by the biometric system.\footnote{This example is an oversimplification. In practice the training data is almost never nicely separated between the two classes. Also, in higher dimensions one expects exponentially small volume covered by samples from the positive and negative classes as is explained in Section~\ref{sec:attack_def}.} The success of the attack shows that the FPR and FRR, metrics used for reporting the accuracy of the classifier, cannot alone be used as proxies for assessing the security of the biometric authentication system. 

\noindent Our main contributions are as follows:

\begin{itemize}[leftmargin=5mm]
    \item We theoretically and experimentally show that in machine learning-based biometric authentication systems, the \emph{acceptance region}, defined as the region in feature space where the feature vectors are accepted by the classifier, is significantly larger than the true positive region, i.e., the region where the target users samples lie. Moreover, this is true even in higher dimensions, where the true positive region tends to be exponentially small~\cite{data-science}.   
    \item As a consequence of the above, we show that an attacker who has access to a biometric system via a black-box feature vector API, can find an accepting sample by simply generating random inputs, at a rate which in many cases is higher than implicated by the FPR. For instance, the success probability of the attack is as high as 0.78 for one of the systems whose EER is only 0.03. The attack requires minimum knowledge of the system: the attacker only needs to know the length of the input feature vector, and permissible range of each feature value (if not normalized). 
    \item We show that the success rate of a random input attack can also be higher than FPR if the attacker can only access the API at the raw input level (before feature extraction). For instance, on one system with an EER of 0.05, the success rate was 0.12. We show that the exponentially small region spanned by these raw random inputs rarely overlaps with the true positive region of any user in the system, owing to the success probability of the attack. Once again the attack only requires minimum knowledge of the system, i.e., the range of values taken by each raw input.
  \item To analyze real-world applicability of the attack, we reconstruct four biometric authentication schemes. Two of them are physiological, i.e., face recognition~\cite{schroff2015facenet} and voice authentication~\cite{nagrani2017voxceleb}. The other two use behavioral traits, i.e., gait authentication~\cite{gafurov2006biometric}, and touch (swipes) authentication~\cite{mahbub2016active, touchalytics}. For each of these modalities, we use four different classifiers to construct the corresponding biometric system. The classifiers are linear support vector machines (SVM), radial SVM, random forests, and deep neural networks. For each of these systems, we ensure that our implementation has comparable performance to the reference. 
    \item Our experimental evaluations show that the average acceptance region is higher than the EER in 9 out of 16 authentication configurations (classifier-modality pairs), and only one in the remaining 7 has the (measured) average acceptance region of zero. Moreover, for some users this discrepancy is even higher.
    For example, in one user model (voice authentication using random forests) the success rate of the random (feature) input is 0.55, when the model's EER is only 0.03, consistent with the system average EER of 0.03.
\item We propose mitigation techniques for both the random feature vector and raw input attacks. For the former, we propose the inclusion of beta-distributed noise in the training data, which ``tightens'' the acceptance region around the true positive region. For the latter, we add feature vectors extracted from a sample of raw inputs in the training data. Both strategies have minimal impact on the FPR and TPR of the system. The mitigation strategy renders the acceptance region to virtually 0 for 6 of the 16 authentication configurations, and for 15 out of 16, makes it lower than FPR. \blue{For reproducibility, we have made our codebase public.\footnote{\blue{Our code is available at: \url{https://imathatguy.github.io/Acceptance-Region}}}} 
    
\end{itemize}

We note that a key difference in the use of machine learning in biometric authentication as compared to its use in other areas (e.g., predicting likelihood of diseases through a healthcare dataset) is that the system should only output its decision: accept or reject~\cite{garcia2018explainable}, and not the detailed confidence values, i.e., confidence of the accept or reject decision. This makes our setting different from \emph{membership inference} attacks where it is assumed that the model returns a prediction vector, where each element is the confidence (probability) that the associated class is the likely label of the input sample~\cite{shokri2017membership, salem2018ml}. In other words, less information is leaked in biometric authentication. Confidence vectors can potentially allow an adversary to find an \emph{accepting sample} by using a hill climbing approach~\cite{martinez-hill}, for instance. 

\section{Background and Threat Model}
\label{sec:background}

\subsection{Biometric Authentication Systems}
The use of machine learning for authentication is a binary classification problem.\footnote{We note that sometimes a \emph{discrimination model}~\cite{xu-soups} may also be considered where the goal is to identify the test sample as belonging to one of $n$ users registered in the system. Our focus is on the authentication model. \blue{Also, see Section~\ref{sec:discussion}.}} The positive class is the target user class, and the negative class is the class of one or more other users. The target user's data for training is obtained during the registration or enrollment phase. For the negative class, the usual process is to use the data of a subset of other users enrolled in the system \cite{han2003personal, chauhan2017behaviocog, xu2017keh, curran2017one, huang2018breathlive, liu2018vocal, chen2017your, song2016eyeveri, chauhan2017breathprint, ho2017mini, crawford2017authentication}. Following best machine learning practice, the data (from both classes) is split into a training and test set. The model is learned over the training set, and the performance of the classifier, its misclassification rate, is evaluated on the test set. 

A raw biometric sample is usually processed to extract relevant features such as fingerprint minutiae or frequency energy components of speech. This defines the feature space for classification. As noted earlier, the security of the biometric system is evaluated via the misclassification rates of the underlying classifier. Two types of error can arise. A type 1 error is when a positive sample (target user sample) has been erroneously classified as negative, which forms the false reject rate (FRR). Type 2 error occurs when a negative sample (from other users) has been misclassified as a positive sample, resulting in the false positive rate (FPR). By tuning the parameters of the classifier, an equal error rate (EER) can be determined which is the rate at which FRR equals FPR. One can also evaluate the performance of the classifier through the receiver operator characteristic (ROC) curve, which shows the full relationship between FRR and FPR as the classifier parameters are varied. 

Once a biometric system is set up, i.e., classifier trained, the system takes as input a biometric sample and outputs accept or reject. In a continuous authentication setting, where the user is continually being authenticated in the background, the biometric system requires a continuous stream of user raw inputs. It has been shown that in continuous authentication systems the performance improves if the decisions is made on the \emph{average} feature vector from a set of feature vectors~\cite{unobservable, touchalytics, chauhan2016gesture}.

\subsection{Biometric API: The Setting}
We consider the setting where the biometric system can be accessed via an API. More specifically, the API can be queried by submitting a biometric sample. The response is a binary accept/reject decision.\footnote{For continuous authentication systems, we assume that the decision is returned after a fixed number of one or more biometric samples.} The biometric system could be \emph{local}, in which case the system is implemented in a secure enclave (a trusted computing module), or \emph{cloud-based} (online), in which the decision part of the system resides in a remote server. We consider two types of APIs. The first type requires raw biometric samples, e.g., the input image in case of face recognition. The second type accepts a feature vector, implying that the feature extraction phase is carried out before the API query. This might be desirable for the following reasons. 

\begin{itemize}[leftmargin=5mm]
    \item Often the raw input is rather large. For instance, in case of face recognition, without compression, an image will need every pixel's RGB information to be sent to the server for feature extraction and authentication. In the case of an image of pixel size $60 \times 60$, this would require approximately 10.8 KB of data. If the feature extraction was offloaded to the user device, it would produce a 512 length feature embedding, which can take as little as 512 bytes. 
    This also applies to continuous authentication which inherently requires a continual stream of user raw inputs. But often decisions are only made on an \emph{average} of a set of feature vectors~\cite{unobservable, touchalytics, chauhan2016gesture}. In such systems, only sending the resultant extracted average feature vector to the cloud also reduces communication cost.
    \item Recent studies have shown that raw sensory inputs can often be used to track users~\cite{das2016tracking}. Thus, they convey more information than what is simply required for authentication. In this sense, extracting features at the client side serves as an \emph{information minimization} mechanism, only sending the relevant information (extracted feature vectors) to the server to minimize privacy loss.
    \item Since the machine learning algorithm only compares samples in the feature space, only the feature representation of the template is stored in the system. In this case, it makes sense to do feature extraction prior to querying the system. 
    
\end{itemize}
From now onwards, when referring to a biometric API we shall assume the \emph{feature vector} based API as the default. We shall explicitly state when the discourse changes to raw inputs. Figure~\ref{fig:ar-diagram} illustrates the two APIs.

\subsection{Threat Model and Assumptions}
We consider an adversary who has access to the API to a biometric system trained with the data of a target user whom the adversary wishes to impersonate. More specifically, the adversary wishes to find an \emph{accepting sample}, i.e., a feature vector for which the system returns ``accept.'' In the case of the raw input API, the adversary is assumed to be in search for a raw input that results in an accepting sample after feature extraction. We assume that the adversary has the means to bypass the end user interface, e.g., touchscreen or phone camera, and can thus \emph{programmatically} submit input samples. There are a number of ways in which this is possible. 

In the online setting, a mis-configured API may provide the adversary access to the authentication pipeline.
In the local setting, if the feature extractor is contained within a secure environment, raw sensory information must be passed to this protected feature extraction process. To achieve this an attacker may construct their own samples through OS utilities. An example is the Monkey Runner~\cite{monkeyrunner} 
on Android, a tool allowing developers to run a sequence of predefined inputs for product development and testing. Additionally, prior work~\cite{shwartz2017shattered} has shown the possibility of compromising the hardware contained within a mobile device, e.g., a compromised digitizer can inject additional touch events. 

Outside of literature, it is difficult to know the exact implementation of real-world systems. However, taking face recognition as an example, we believe our system architecture is similar to real world facial authentication schemes, drawing parallels to pending patent US15/864,232 \cite{son2018face}. 
Additionally there are API services dedicated to hosting different components of the facial recognition pipeline. Clarifai, for example, hosts machine learning models dedicated to the extraction of {face embeddings} within an uploaded image~\cite{clarifai-face}.
A developer could then use any number of Machine Learning as a Service (MLaaS) providers to perform the final authentication step, without needing to pay premiums associated with an end-to-end facial recognition product.

\noindent We make the following assumptions about the biometric API.

\begin{itemize}[leftmargin=5mm]
    \item The input feature length, i.e., the number of features used by the model, is public knowledge.
    \item Each feature in the feature space is min-max normalized. Thus, each feature takes value in the real interval $[0, 1]$. This is merely for convenience of analysis. Absent this, the attacker can still assume plausible universal bounds for all features in the feature space.
    \item The attacker knows the identifier related to the user, e.g., the username, he/she wishes to impersonate.
\end{itemize}

Beyond this we do not assume the attacker to have any knowledge of the underlying biometric system including the biometric modality, the classifier being used, the target user's past samples, or any other dataset which would allow the attacker to infer population distribution of the feature space of the given modality.

\begin{figure}[th!]
	\centering
    \includegraphics[width=0.80\columnwidth]{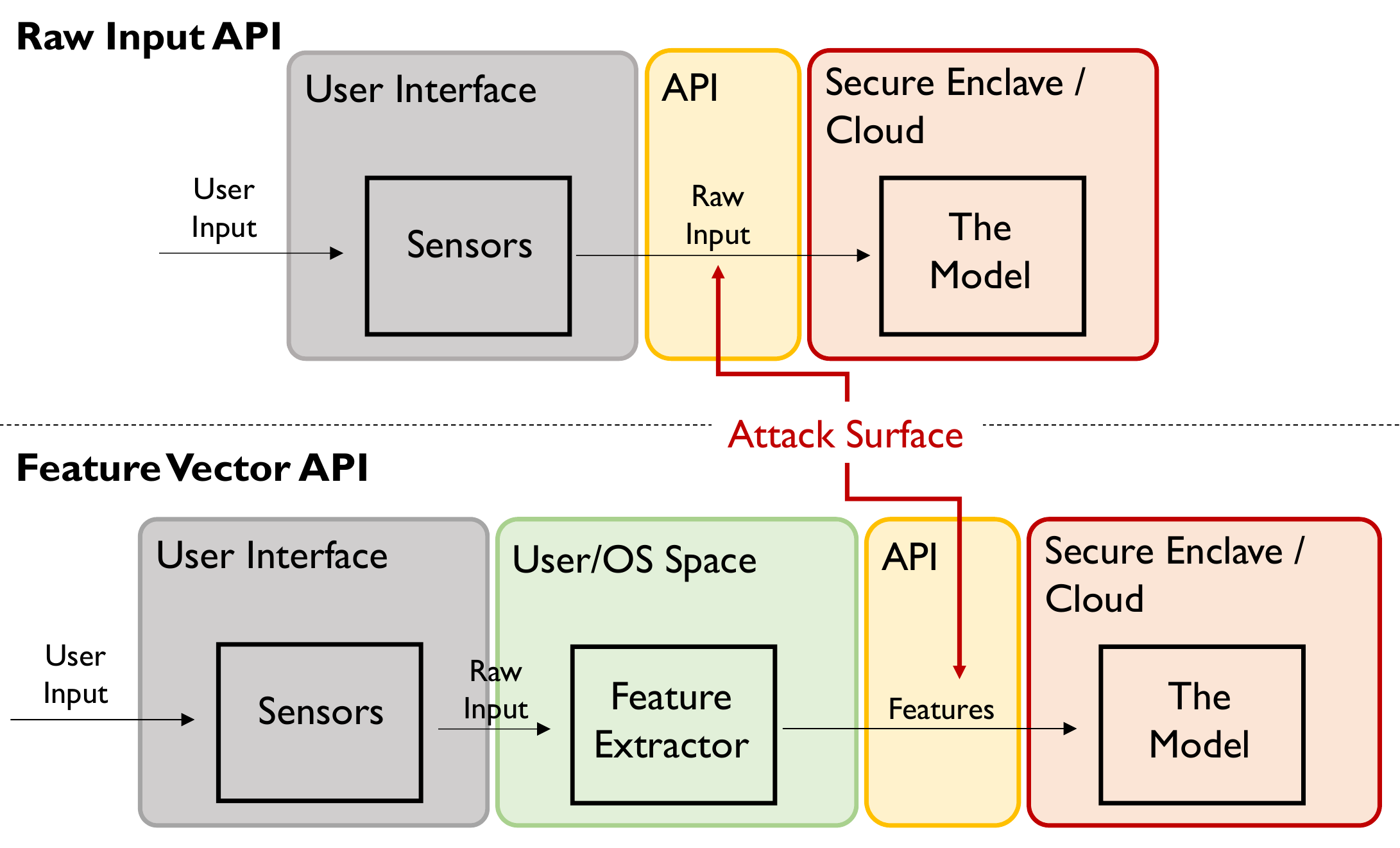}
    \vspace{-4mm}
  	\caption{The threat model and the two types of biometric API.}
 	\label{fig:ar-diagram}
\end{figure}

\section{Acceptance Region and Proposed Attack}
\label{sec:attack_def}

\subsection{Motivation and Attack Overview}
Given a feature space, machine learning classifiers learn the region where feature vectors are classified as positive features and the region where vectors are classified as negative features. We call the former, the acceptance region and the latter the rejection region. Even though the acceptance region is learnt through the data from the target user, it does not necessarily tightly surround the region covered by the target user's  samples. Leaving aside the fact that this is desirable so as to not make the model ``overfitted'' to the training data, this implies that even vectors that do not follow the distribution of the target user's samples, may be accepted. In fact these vectors may bare no resemblance to any positive or negative samples from the dataset. Consider a toy example, where the feature space consists of only two vectors. The two-dimensional plane in Figure~\ref{fig:toy_example} shows the distribution of the positive and negative samples in the training and testing datasets. A linear classifier may learn the acceptance and rejection regions split via the decision boundary shown in the figure. This decision boundary divides the two dimensional feature space in half. Even though there is a small overlap between the positive and negative classes, when evaluated against the negative and positive samples from the dataset there would be an acceptably low false positive rate. However if we construct a vector by uniformly sampling the two features from the interval $[0, 1]$, with probability $1/2$ it will be an accepting sample. If this model could be queried through an API, an attacker is expected to find an accepting sample in two attempts. Arguably, such a system is insecure. 

\begin{figure}[th!]
	\centering
    \includegraphics[width=0.70\columnwidth]{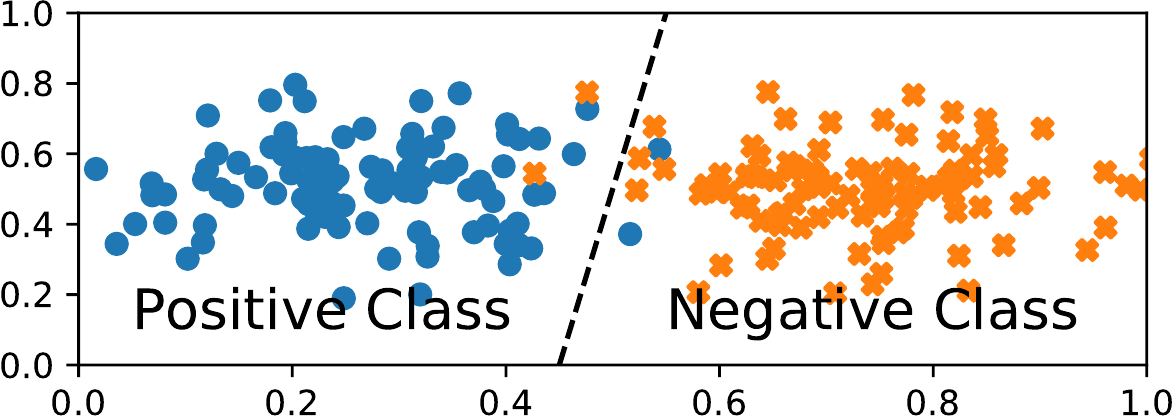}
    \vspace{-0.3cm}
  	\caption{Example feature space separation by a linear boundary between two classes. This demonstrates low FPR and FRR of test sample classification, yet allows approximately 50\% of the feature space to be accepted as positive. }
 	\label{fig:toy_example}
\end{figure}

Figure~\ref{fig:toy_example} illustrates that the acceptance region can be larger than the region covered by the target user's samples. However, in the same example, the area covered by the target user's samples is also quite high, e.g., the convex hull of the samples. As we discuss next, in higher dimensions, the area covered by the positive and negative examples is expected to be concentrated in an exponentially small region~\cite{data-science}. However, the acceptance region does not necessarily follow the same trend. 


\subsection{Acceptance Region}
\label{sec:attack_area}
\descr{Notations.} 
Let $\mathbb{I} \coloneqq [0, 1]$ denote the unit interval $[0, 1]$, and let $\mathbb{I}^n \coloneqq [0, 1]^n$ denote the $n$-dimensional unit cube with one vertex at the origin. The unit cube represents the feature space with each (min-max normalized) feature taking values in $\mathbb{I}$. Let $f$ denote a model, i.e., an output of a machine learning algorithm (classifier) trained on a dataset $D = \{ (\mathbf{x}_i, y_i) \}_{i \in [m]}$, where each $\mathbf{x}_i$ is a feature vector and $y_i \in \{+1, -1\}$ its label. The label $+1$ indicates the positive class (target user) and $-1$ the negative class (one or more other users of the authentication system). We may denote a positive example in $\mathbf{x} \in D$ by $\mathbf{x}^+$, and any negative example by $\mathbf{x}^-$. The model $f$ takes feature vectors $\mathbf{x} \in \mathbb{I}^n$ as input and outputs a \emph{predicted} label $\hat{y} \in \{+1, -1\}$. 

\descr{Definitions.} \emph{Acceptance region} of a model $f$ is defined as
\begin{equation}
    A_f \coloneqq \{ \mathbf{x} \in \mathbb{I}^n : f(\mathbf{x}) = +1 \},
\end{equation}
The $n$-dimensional volume of $A_f$ is denoted $\text{Vol}_n(A_f)$. The definition of acceptance region is analogous to the notion of decision regions in decision theory~\cite[\S 1.5]{bishop-pattern}. We will often misuse the word acceptance region to mean both the region or the volume covered by the region where there is no fear of ambiguity. Let FRR and FPR be evaluated on the training dataset $D$.\footnote{In practice, the FRR and FPR are evaluated against a subset of $D$ called a holdout or testing dataset.}
Let $\mathbf{x} \sample{\$} \mathbb{I}^n$ denote sampling a feature vector $\mathbf{x}$ uniformly at random from $\mathbb{I}^n$. In a \emph{random input attack}, the adversary samples $\mathbf{x} \sample{\$} \mathbb{I}^n$ and gives it as input to $f$. The attack is successful if $f(\mathbf{x}) = + 1$. The success probability of a random guess is defined as
\begin{equation}
\label{eq:ar}
\Pr  [ f(\mathbf{x}) = + 1 : \mathbf{x} \sample{\$} \mathbb{I}^n ].    
\end{equation}
Since the $n$-volume of the unit cube is 1, we immediately see that the above probability is exactly $\text{Vol}_n(A_f)$. Thus, we shall use the volume of the acceptance region as a direct measure of the success probability of random guess. Finally, we define the \emph{rejection region} as $\mathbb{I}^n - A_f$. It follows that the volume of the rejection region is $1 - \text{Vol}_n(A_f)$.

\descr{Existence Results.}
Our first observation is that even if the FPR of a model is zero, its acceptance region can still be non-zero. Note that this is not evident from the fact that there are positive examples in the training dataset $D$: the dataset is finite and there are infinite number of vectors in $\mathbb{I}^n$, and hence the probability of picking these finite positive examples is zero.
\begin{proposition}
There exists a dataset $D$ and a classifier with output $f$ such that $\text{FRR} = \text{FPR} = 0$, and $\text{Vol}_n(A_f) > 0$.
\end{proposition}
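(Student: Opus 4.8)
The plan is to prove the proposition by explicit construction: since the statement is purely existential, it suffices to exhibit one dataset $D$ together with one model $f$ having all three properties simultaneously. I would formalize the single-feature toy example sketched earlier in this section, working in dimension $n = 1$ so that the feature space is the unit interval $\mathbb{I} = [0,1]$; the construction pads trivially to arbitrary $n$, as noted below.

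The construction itself proceeds in three quick steps. First, I would choose the dataset so that the two classes are strictly separated: let $D$ consist of finitely many positive examples $\mathbf{x}^+$ all lying in $[0, \tfrac{1}{2})$ and finitely many negative examples $\mathbf{x}^-$ all lying in $(\tfrac{1}{2}, 1]$, with at least one example of each label. Second, I would take $f$ to be the threshold rule that outputs $+1$ exactly when $x < \tfrac{1}{2}$, i.e.\ the linear decision boundary $x = \tfrac{1}{2}$; this is a genuine output of a standard learner (for instance a linear SVM or a one-node decision tree) on this $D$, so it qualifies as a model in the sense of the paper. Third, I would simply read off the three quantities: every $\mathbf{x}^+ \in [0,\tfrac{1}{2})$ is accepted, so $\text{FRR} = 0$; every $\mathbf{x}^- \in (\tfrac{1}{2},1]$ is rejected, so $\text{FPR} = 0$ (both evaluated on $D$ as stipulated); and by definition $A_f = [0, \tfrac{1}{2})$, whence $\text{Vol}_1(A_f) = \tfrac{1}{2} > 0$.

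I do not expect any substantive obstacle, since the proposition is essentially a sanity check that decouples classifier accuracy from acceptance-region volume. The only points requiring care are bookkeeping: ensuring that FRR and FPR are measured against $D$ (per the footnote), and insisting that the exhibited $f$ be realizable as the output of some classifier rather than an arbitrarily hand-picked function. If one wants the cleaner claim that the volume stays bounded away from zero in every dimension, I would apply the same threshold to a single coordinate and ignore the remaining ones, which preserves $\text{FRR} = \text{FPR} = 0$ while keeping $\text{Vol}_n(A_f) = \tfrac{1}{2}$ independent of $n$.
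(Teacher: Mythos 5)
Your proposal is correct, but it takes a different route from the paper's proof of this proposition. The paper argues generically: it assumes an \emph{arbitrary} linearly separable dataset $D$, invokes the perceptron convergence theorem to obtain a separating hyperplane $\langle \mathbf{w}, \mathbf{x}\rangle + b = 0$ (which immediately gives $\text{FRR} = \text{FPR} = 0$), and then shows $\text{Vol}_n(A_f) > 0$ by contradiction --- if the volume were zero, the acceptance region would be a face of the cube of dimension less than $n$, so every point with $\langle \mathbf{w}, \mathbf{x}\rangle + b > 0$ would lie outside $\mathbb{I}^n$, contradicting $\text{FRR}=0$ since at least one positive example lies in the cube. Your construction --- positives below a coordinate threshold of $\tfrac12$, negatives above, classifier thresholding a single coordinate, padding to dimension $n$ --- is instead essentially the paper's proof of the \emph{next} proposition (Proposition~\ref{prop:ar-half}), where the paper places the classes on either side of $x_1 = 0.5$ and initializes the perceptron at that hyperplane so it trivially stops; you thereby prove the stronger claim $\text{Vol}_n(A_f) \ge \tfrac12$ in one step, which of course implies this proposition. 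What each approach buys: the paper's argument shows the phenomenon is generic for \emph{any} linearly separable data and any separating hyperplane the learner returns, at the price of the convergence theorem plus the geometric contradiction; yours is more elementary and yields an explicit volume. One small point of care in your version: a max-margin linear SVM on your $D$ places the boundary at the midpoint between the closest opposite-class points, which equals exactly $\tfrac12$ only for data symmetric about $\tfrac12$; either pick symmetric points (e.g., a positive at $0.25$ and a negative at $0.75$) or, as the paper does, specify a learner/initialization that provably returns the stated hyperplane --- otherwise just note that any separating threshold still yields strictly positive volume, which is all the proposition needs.
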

\begin{proof}
Assume a dataset $D$ that is linearly separable. This means that there exists a hyperplane denoted $H(\mathbf{x})$ such that for any positive example $\mathbf{x}^+ \in D$, we have $H(\mathbf{x}^+) > 0$ and for any negative example in $D$ we have $H(\mathbf{x}^-) < 0$. Consider the perceptron as an example of a classifier which constructs a linear model: $f_{\mathbf{w}, b} (\mathbf{x}) = + 1$ if $\langle \mathbf{w}, \mathbf{x} \rangle + b > 0$, and $-1$ otherwise. Since the data is linearly separable, the perceptron convergence theorem states that the perceptron learning algorithm will find a solution, i.e., a separating hyperplane~\cite{rosenblatt}. Intersecting this hyperplane $\langle \mathbf{w}, \mathbf{x} \rangle + b = 0$ with the unit cube creates two sectors. The sector where $f_{\mathbf{w}, b} (\mathbf{x}) = + 1$ is exactly the acceptance region $A_{f_{\mathbf{w}, b}}$. The $n$-volume of $A_{f_{\mathbf{w}, b}}$ cannot be zero, since otherwise it is one of the sides of the unit cube with dimension less than $n$, implying that all points $\langle \mathbf{w}, \mathbf{x} \rangle + b > 0$ lie outside the unit cube. A contradiction, since $\text{FRR} = 0$ (there is at least one positive example). 
\end{proof}

A non-zero acceptance region is not necessarily a threat. Of practical concern is a \emph{non-negligible} volume. Indeed, the volume may be negligible requiring a large number of queries to $f$ before an \emph{accepting sample} is produced. The following result shows that there are cases in which the acceptance region can be non-negligible.
\begin{proposition}
\label{prop:ar-half}
There exists a dataset $D$ and a classifier with output $f$ such that $\text{FRR} = \text{FPR} = 0$, and $\text{Vol}_n(A_f) \ge 1/2$.
\end{proposition}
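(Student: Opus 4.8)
The plan is to build directly on the construction in the previous proposition, but to exhibit a concrete linearly separable dataset together with a concrete linear classifier whose separating hyperplane passes through the center of the unit cube, $\mathbf{c} = (1/2, \ldots, 1/2)$. The key observation is geometric: the cube $\mathbb{I}^n$ is centrally symmetric about $\mathbf{c}$, so any hyperplane through $\mathbf{c}$ splits it into two pieces that are reflections of one another through $\mathbf{c}$ and therefore have equal $n$-volume, namely $1/2$ each. Whichever of the two pieces is the acceptance region then has volume exactly $1/2$, which already meets the required bound $\ge 1/2$.

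Concretely, I would take the simplest such hyperplane, the axis-aligned one $x_1 = 1/2$, realized by the linear model $f_{\mathbf{w}, b}$ with $\mathbf{w} = (1, 0, \ldots, 0)$ and $b = -1/2$, so that $f_{\mathbf{w}, b}(\mathbf{x}) = +1$ iff $x_1 > 1/2$. For the dataset I would place a single positive example strictly on the $+1$ side, e.g.\ $\mathbf{x}^+ = (3/4, 1/2, \ldots, 1/2)$, and a single negative example strictly on the $-1$ side, e.g.\ $\mathbf{x}^- = (1/4, 1/2, \ldots, 1/2)$. This $D$ is linearly separable (indeed by the chosen hyperplane), and both points are classified correctly, so $\text{FRR} = \text{FPR} = 0$. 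The acceptance region $A_{f_{\mathbf{w}, b}} = \{ \mathbf{x} \in \mathbb{I}^n : x_1 > 1/2 \}$ then has volume $\tfrac{1}{2} \cdot 1^{\,n-1} = 1/2$ (the boundary $x_1 = 1/2$ has measure zero, so the strict inequality is irrelevant).

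The steps in order are: (i) fix the center-crossing hyperplane and its associated linear classifier; (ii) choose positive and negative points on the correct sides to pin down $D$; (iii) verify $\text{FRR} = \text{FPR} = 0$ by direct evaluation of $f_{\mathbf{w}, b}$ on the two examples; and (iv) compute $\text{Vol}_n(A_{f_{\mathbf{w}, b}})$, either by the elementary integral for the half-cube or, more conceptually, by the central-symmetry argument above.

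There is no real obstacle in this construction; the only point requiring a little care is that the classifier should be built explicitly rather than obtained by appealing to the perceptron convergence theorem as in the previous proposition. The perceptron is only guaranteed to return \emph{some} separating hyperplane, not one through the center, so it does not by itself control $\text{Vol}_n(A_f)$. Exhibiting the hyperplane and the dataset directly sidesteps this issue and makes the volume computation immediate.
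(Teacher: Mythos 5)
Your proof is correct and essentially the same as the paper's: both use the hyperplane $x_1 = 1/2$, a dataset whose positive examples lie on one side and negative examples on the other, and the observation that this half-cube has volume $1/2$. The only cosmetic difference is that the paper still frames $f$ as a perceptron output—initializing the perceptron at $\mathbf{w} = (1,0,\ldots,0)$, $b = -0.5$ so that it trivially terminates at that hyperplane (which addresses your concern that the convergence theorem alone does not control which separating hyperplane is returned, and keeps $f$ literally the output of a learning algorithm as the paper's definition requires)—whereas you exhibit the linear model directly.
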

\begin{proof}
Consider again the perceptron as an example of a classifier. Let $D$ be a dataset such that for all positive examples $\mathbf{x}^+$, we have $x^+_1 > 0.5$, and for all negative examples $x^-_1 < 0.5$. The rest of the features may take any value in $\mathbb{I}$. The resulting data is linearly separable by the ($n-1$ dimensional) hyperplane $x_1 - 0.5 = 0$. 
Initialize the perceptron learning algorithm with $w_1 = 1$, $w_i = 0$ for all $2 \le i \le n$, and $b = -0.5$. The algorithm then trivially stops with this hyperplane. Clearly, with this hyperplane, we have $\text{FRR} = 0$, $\text{FPR} = 0$, and the acceptance region is 1/2. 
\end{proof}

The above examples illustrate the high probability of success of the random input attack due to non-negligible acceptance region. However, the example used is contrived. In practice, datasets with a  ``nice'' distribution as above are seldom 
encountered, and the model is more likely to exhibit non-zero 
generalization error (a tradeoff between FRR and FPR). Also, in practice, more sophisticated classifiers such as the support 
vector machine or deep neural networks are used instead of the perceptron. However, we shall 
demonstrate that the issue persists in real datasets and 
classifiers used in practice. We remark that we are interested in the case when $\text{Vol}_n(A_f) > \text{FPR}$, since arguably it is misleading to use the FPR as a measure of security of such an authentication system. This could happen even when the FPR is non-zero. When and why would this case occur? We explain this in the following.

\descr{Real Data and High Dimensions.} 
We first discretize the feature space. 
For a given positive integer $B$, let $\mathbb{I}_B$ denote the binned (or discrete) version of the interval $\mathbb{I}$ partitioned into $B$ equally sized bins. Clearly, each bin is of width $1/B$. Let $\mathbb{I}_B^n$ denote the discretized feature space. Given a set of feature values from $\mathbb{I}$, we say that a bin in $\mathbb{I}_B$ is \emph{filled} if there are $> \epsilon_n$ feature values falling in that bin, where $\epsilon_n$ is a cutoff to filter outliers. The number of filled bins is denoted by $\alpha$. Clearly $\alpha \le B$. See Figure~\ref{fig:bins}. 


\begin{figure}[th!]
	\centering
	\vspace{-1mm}
\resizebox{0.55\columnwidth}{!}{%
\begingroup%
  \makeatletter%
  \providecommand\rotatebox[2]{#2}%
  \ifx\svgwidth\undefined%
    \setlength{\unitlength}{203.66368408bp}%
    \ifx\svgscale\undefined%
      \relax%
    \else%
      \setlength{\unitlength}{\unitlength * \real{\svgscale}}%
    \fi%
  \else%
    \setlength{\unitlength}{\svgwidth}%
  \fi%
  \global\let\svgwidth\undefined%
  \global\let\svgscale\undefined%
  \makeatother%
  \begin{picture}(1,0.34838103)%
    \put(0,0){\includegraphics[width=\unitlength,page=1]{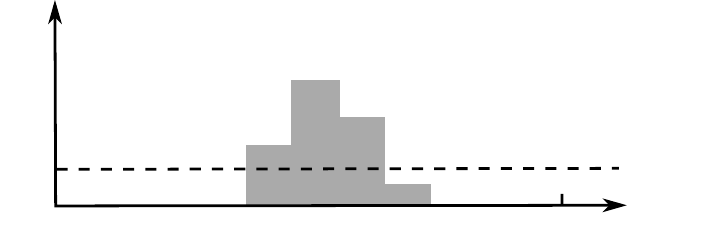}}%
    \put(0.04596194,0.01451667){\color[rgb]{0,0,0}\makebox(0,0)[lb]{\smash{\textbf{0}}}}%
    \put(0.78498776,0.01154079){\color[rgb]{0,0,0}\makebox(0,0)[lb]{\smash{\textbf{1}}}}%
    \put(0.88676664,0.09652185){\color[rgb]{0,0,0}\makebox(0,0)[lb]{\smash{\textbf{$\epsilon_n = 2$}}}}%
    \put(0.03075977,0.07302251){\color[rgb]{0,0,0}\rotatebox{90}{\makebox(0,0)[lb]{\smash{\textbf{Frequency}}}}}%
    \put(0.38841542,0.00059937){\color[rgb]{0,0,0}\makebox(0,0)[lb]{\smash{\textbf{Bins}}}}%
  \end{picture}%
\endgroup%
}
\vspace{-3.5mm}
\caption{The binned version $\mathbb{I}_B$ of the unit interval $\mathbb{I}$. Each bin is of width $1/B$ ($B$ not specified). The number of filled bins is $\alpha = 3$, with a cut-off of $\epsilon_n = 2$.}
\label{fig:bins}

\end{figure}

For the $i$th feature, let $\alpha_i^{+}$ denote the number of bins filled by all positive examples in $D$. We define:
\[
R^+ \coloneqq \frac{1}{B^n}\prod_{i = 1}^n \alpha^+_i 
\]
as the \emph{volume of the true positive region}. We define $\alpha_i^{-}$ and $R^-$ analogously as the volume of the true negative region. Let $c \in [0, 1]$ be a constant. If each of the $\alpha_i^{+}$'s is at most $cB$, then we see that $R^+ = c^{-n}$. For instance, if $c \le 1/2$, then $R^+ \le 2^{-n}$. In other words, the volume of the region spanned by the user's own samples is exponentially small as compared to the volume of the unit cube. In practice, the user's data is expected to be normally distributed across each feature, implying that the $\alpha_i^{+}$'s are much smaller than $B/2$, which makes the above volume a loose upper bound. The same is true of the $\alpha_i^{-}$'s. Figure~\ref{fig:alpha-hist} shows the filled bins from one of the features in the Face Dataset (see Section~\ref{sub:datasets}). For the same dataset, the average true positive region is $5.781 \times 10^{-98}$ (with a standard deviation of $\pm 2.074 \times 10^{-96}$) and the average true negative region is $1.302 \times 10^{-55}$ (with a standard deviation of $\pm 2.172 \times 10^{-54}$) computed over 10,000 iterations considering a 80\% sample of the target user's data, and a balanced sample of other users.\footnote{We compute the true positive and negative region by only considering the minimum and maximum feature values covered by each user for each feature with binning equal to the floating point precision of the system. Thus, this is a conservative estimate of the true positive region.}



We thus expect a random vector from $\mathbb{I}^n$ to be outside the region spanned by the target user with overwhelming probability. Thus, if a classifier defines an acceptance region tightly surrounding the target user's data, the volume of the acceptance region will be negligible, and hence the random input attack will not be a threat. However, as we shall show in the next sections, this is not the case in practice. 

\descr{Factors Effecting Acceptance Region.} We list a few factors which effect the volume of the acceptance region.

\begin{itemize}[leftmargin=5mm]
    \item One reason for a high acceptance region is that the classifier is not penalized for classifying \emph{empty space} in the feature space as either positive or negative. For instance, consider Figure~\ref{fig:alpha-hist}. There is significant empty space for the feature depicted in the figure:  none of the positive or negative samples have the projected feature value in this space. A classifier is generally trained with an objective to minimize the misclassification rate or a loss function (where, for instance, there is an asymmetrical penalty between true positives and false positives)~\cite{bishop-pattern}. These functions take input from the dataset $D$. Thus, empty regions in the feature space which do not have examples in $D$ can be classified as either of the two classes without being penalized during training, resulting in a non-negligible acceptance region.
    \item The acceptance region is also expected to be big if there is high variance in the feature values taken by the positive examples. In this case, the $\alpha^{+}_i$'s will be much closer to $B$, resulting in a non-negligible volume $R^+$.
    \item On the other hand, the acceptance region is likely to be small if the variances of the feature values in the negative examples are high. The classifier, in order to minimize the FPR, will then increase the region where samples are rejected, which would in turn make the acceptance region closer in volume to the true positive region. 
\end{itemize}

\begin{figure}[th!]
	\centering
    \includegraphics[width=0.75\columnwidth]{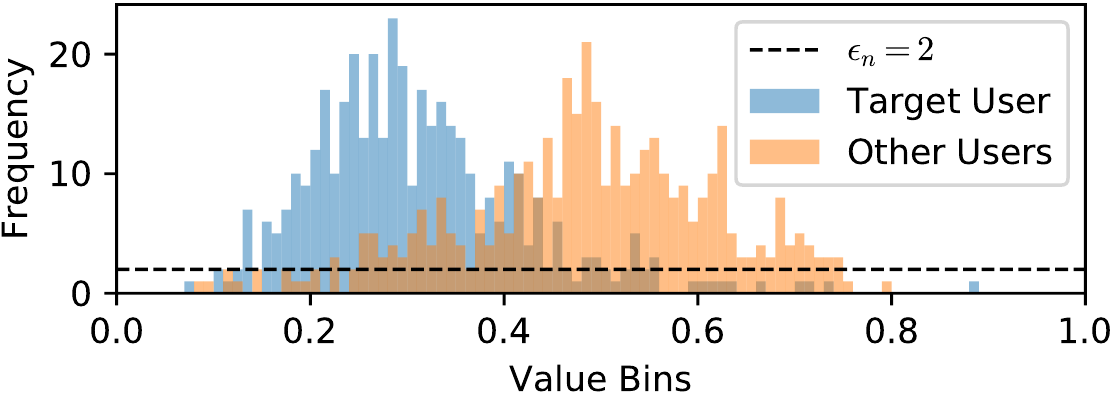}
    \vspace{-3mm}
  	\caption{The histogram of feature values of one of the features in the Face Dataset (cf. \S~\ref{sub:datasets}). Here we have $B = 100$. The number of filled bins for the target user is $\alpha_i^+ = 35$ (with 400 samples), and for the negative class (10 users; same number of total samples) it is $\alpha_i^- = 50$. A total of $24$ bins are not filled by any of the two classes, implying that (approximately) $0.24$ of the region for this feature is empty.}
 	\label{fig:alpha-hist}
 	\vspace{-4mm}
\end{figure}


We empirically verify these observations in Section~\ref{sec:synthetic}.  The last observation also hints at a possible method to tighten the acceptance region around the region spanned by the target user: generate random noise around the target user's vectors and treat it as belonging to the negative class. We demonstrate the effectiveness of this method in Section~\ref{sec:mitigation}. Jumping ahead, if the noise is generated from an appropriate distribution, this will have minimal impact on the FRR and FPR of the model.




%
\section{Evaluation on Biometric Systems}
\label{sec:realscheme}

To evaluate the issue of acceptance region on real-world biometric systems, we chose four different modalities: gait, touch, face and voice. The last two modalities are used as examples of user authentication at the point of entry into a secured system, whilst gait and touch are often used in continuous authentication systems~\cite{barbello2016continuous}. We first describe the four biometric datasets, followed by our evaluation methodology, the machine learning algorithms used, and finally our results and observations.

\subsection{The Biometric Datasets}
\label{sub:datasets}

\subsubsection{Activity Type (Gait) Dataset}
The activity type dataset \cite{anguita2013public}, which we will refer to as the ``gait'' dataset, was collected for human activity recognition. Specifically its aim is to provide a dataset for determining if a user is sitting, laying down, walking, running, walking upstairs or downstairs, etc. However, as the dataset retains the unique identifiers for users per biometric record, we re-purpose the dataset for authentication. This dataset contains 30 users, with an average of $343 \pm 35$ ($\text{mean} \pm \text{SD}$) biometric samples per user, there is an equal number of activity type samples for each user. For the purpose of authentication, we do not isolate a specific type of activity. Instead, we include them as values of an additional feature. The activity type feature increases the total number of features to 562. We will refer to these features as \emph{engineered} features as they are manually defined (e.g., by an expert) as opposed to \emph{latent} features extracted from a pre-trained neural network for the face and voice datasets.


\subsubsection{Touch Dataset}
The UMDAA-02 Touch Dataset \cite{mahbub2016active} is a challenge dataset to provide data for researchers to perform baseline evaluations of new touch-based authentication systems. Data was collected from 35 users, with an average of $3667 \pm 3012$ 
swipes per user. This dataset was collected by lending mobile devices to the participants over a prolonged period of time. The uncontrolled nature of the collection produces a dataset that accurately reflects swipe interactions with constant and regular use of the device. This dataset contains every touch interaction performed by the user including taps. In a pre-processing step we only consider sequences with more than 5 data points as swipes. Additionally, we set four binary features to indicate the direction of the swipe, determined from the dominant vertical and horizontal displacement. We retained all other features in \cite{mahbub2016active} bar inter-stroke time, as we wished to treat each swipe independently, without chronological order. We substitute this feature with half-time of the stroke. This produces a total of 27 engineered touch features. 

\subsubsection{Face Dataset}
\label{subsub:face-dataset}
FaceNet~\cite{schroff2015facenet} proposes a system based on neural networks that can effectively learn embeddings (feature vectors) that represent uniquely identifiable facial information from images. Unlike engineered features, these embeddings may not be directly explainable as they are automatically extracted by the underlying neural network. This neural network can be trained from any dataset containing labeled faces of individuals. There are many sources from which we can obtain face datasets, CASIA-WebFace~\cite{yi2014learning}, VGGFace2~\cite{cao2018vggface2} and Labeled Faces in the Wild (LFW)~\cite{LFWTechUpdate} are examples of such datasets. However, with a pre-trained model, we can conserve the time and resources required to re-train the network. The source code for FaceNet~\cite{facenet-source}
contains two pretrained models available for public use (at the time of writing): one trained on CASIA-WebFace, and another trained on VGGFace2. We opt to use a model pre-trained on VGGFace2\footnote{(20180402-114759) is the identifier of pre-trained model used.} 
, while retaining CASIA-WebFace as our dataset for classifier training. We choose to use different datasets for the training of the embeddings and the classifiers to simulate the exposure of the model to never before seen data. Our face dataset is a subset of CASIA-WebFace containing only the top 100 identities with the largest number of face images (producing $447 \pm 103$ images per individual). This model produces 512 latent features from input images of pixel size 160x160 which have been centered and aligned. Recall that face alignment involves finding a bounding box on the face on an image, before cropping and resizing to the requested dimensions.



\subsubsection{Speaker Verification (Utterances)}%
VoxCeleb~\cite{nagrani2017voxceleb}, and VoxCeleb2~\cite{chung2018voxceleb2} are corpuses of spoken recordings by celebrities in online media. These recordings are text-independent, i.e., the phrase uttered by the user is not pre-determined. Text-independent speaker verification schemes depart from text-dependent verification schemes in which the individual is bound to repeat a pre-determined speech content. Thus, the task of text-independent verification (or identification) is to distinguish how the user speaks as an individual, instead of how the user utters a specific phrase. The former objective is an arguably harder task. Despite the increased difficulty, researchers have trained neural networks to convert speaker utterances into a set of latent features representing how individuals speak. These works have also released their models to the public, increasing the accessibility of speaker verification to developers. We opt to use the pre-trained model of VoxCeleb~\cite{nagrani2017voxceleb}, with utterances from VoxCeleb2~\cite{chung2018voxceleb2}. From VoxCeleb2, we only use the test portion of the dataset, which contains 118 Users with an average of $406 \pm 87$ utterances. VoxCeleb was trained as a Siamese neural network~\cite{siamese-bromley} for one-shot comparison between two audio samples. A Siamese network consists of two identical branches that produce two equal size outputs from two independent inputs for distance comparison. To fit the pre-trained model into our evaluation of ML-based models, we extract embeddings from one of the twin networks and disregard the second branch. The 1024-length embedding is then used as the feature vector within our evaluation.

\subsection{Evaluation Methodology}



\label{sub:eval-method}
In our creation of biometric models for each user, we seek to obtain the baseline performance of the model with respect to the ability of negative user samples gaining access (i.e. FPR), and the measured Acceptance Region (AR). We use the following methodology to evaluate these metrics for each dataset and each classification algorithm.
\begin{enumerate}[leftmargin=5mm]
    \item We min-max normalize each extracted feature over the entire dataset between 0 and 1.
    \item We partition the dataset into a $(70\%, 30\%)$ split for training and testing sets, respectively.
    \item For both training and testing samples, we further sample an equal number of negative samples from every other user such that the total number of negative samples are approximately equal to the number of samples from the target user, representing the positive class, i.e., the positive and negative classes are balanced.
    \item Using the balanced training set from step 3, we train a two-class classifier defining the target user set as the positive class, and all remaining users as negative.
    \item We test the trained model using the balanced testing set from step 3. This establishes the FRR and FPR of the system.
    \item We uniformly sample one million vectors from $\mathbb{I}_n$, where $n$ is the dimension of the extracted features. Testing the set of vectors against the model measures the acceptance region (AR).
    \item We record the confidence values of the test prediction for the user's positive test samples, other users' negative test samples, and the uniformly sampled vectors. These confidence values produce ROC curves for FRR, FPR and AR.
    \item Repeat steps 3-7 by iterating through every user in the dataset as the target user.
\end{enumerate}

\begin{remark}
\label{rem:monte-carlo}
In general, the decision regions (accept and reject in the case of authentication) learned by the classifiers can be quite complex~\cite{region-dnn}. Hence, it is difficult to determine them analytically, despite the availability of learned model parameters. We instead use a Monte Carlo method by sampling random feature vectors from $\mathbb{I}_n$ where each feature value is sampled uniformly at random from $\mathbb{I}$. With enough samples (one million used in our experiments, \blue{ and averaged over 50 repetitions}), the fraction of random samples accepted by the classifier serves as an estimate of the acceptance region as defined by Eq.~\ref{eq:ar} due to the law of large numbers.
\end{remark}

\blue{
\begin{remark}
Our evaluation of the biometric systems is using the mock attacker model (samples from the negative class modelled as belonging to an attacker) as it is commonly used~\cite{eberz2017evaluating}. We acknowledge that there are other attack models such as excluding the data of the attacker from the training set~\cite{eberz2017evaluating}. Having the attacker data included in the training dataset, as in the mock attacker model, yields better EER. On the other hand, it is also likely to lower the AR of the system, due to increased variance in the negative training dataset. Thus, the use of this model does not inflate our results.
\end{remark}
}

\blue{
\begin{remark}
\label{rem:balanced}
We have used balanced datasets in our experiments, i.e., the number of positive and negative samples being the same. It is true that a balanced dataset is not ideal for minimizing AR; more negative samples may reduce the acceptance region. However, an unbalanced dataset, e.g., more negative samples than positive samples, may be biased towards the negative class, resulting in misleadingly high accuracy~\cite{eberz2017evaluating, sugrim2019robust}. A balanced dataset yields the best EER without being biased towards the positive or negative class.
\end{remark}
}

\subsection{Machine Learning Classifiers}
Our initial hypothesis (Section~\ref{sec:attack_def}) stipulates that AR is related to the training data distribution, and not necessarily to any weakness of the classifiers learning from the data. To demonstrate this distinction, we elected four different machine learning algorithms: Support Vector Machines (SVM) with a linear kernel (LinSVM), SVM with a radial basis function kernel (RBFSVM), Random Forests (RNDF) and Deep Neural Networks (DNN). Briefly, SVM uses the training data to construct a decision boundary that maximizes the distance between the closest points of different classes (known as support vectors). The shape of this boundary is dictated by the kernel used; we test both a linear and a radial kernel. 
Random Forests is an aggregation of multiple decision tree learners formally known as an ensemble method. Multiple learners in the aggregation are created through bagging, whereby the training dataset is split into multiple subsets, each subset training a distinct decision tree. The decisions from the multiple models are then aggregated to produce the random forest's final decision. 
DNNs are a class of machine learning models that contain hidden layers between an input and an output layer; each layer containing neurons that activate as a function of previous layers. Specifically we implement a convolutional neural network with hidden layers leading to a final layer of our two classes, accept and reject.
All four of these machine learning models are trained as supervised learners. As such, we provide the ground truth labels to the model during training. 

The linear SVM was trained with $C=10^4$, and default values included within Scikit-learn's Python library for the remaining parameters \cite{scikit-learn}. For radial SVM we also used $C=10^4$ while keeping the remaining parameters as default. The Random Forests classifier was configured with 100 estimators. DNNs were trained with TensorFlow Estimators~\cite{tensorflow-estimators}
with a varying number of internal layers depending on the dataset. The exact configurations are noted in Appendix~\ref{sec:dnn_config}.

\begin{remark}
We reiterate that our trained models are reconstructions of past works.
However, we endeavor that our models recreate error rates similar to the originally reported values on the same dataset. On Mahbub et al.'s touch dataset~\cite{mahbub2016active}, the authors achieved 0.22 EER with a RNDF classifier, by averaging 16 swipes for a single authentication session. We are able to achieve a comparable EER of 0.21 on RNDF without averaging. 
For face authentication, we evaluate a subset of CASIA-Webface, consequently there is no direct comparison. The original Facenet accuracy in verifying pairs of LFW~\cite{LFWTechUpdate} faces is 98.87\%~\cite{schroff2015facenet}, but our adoption of model-based authentication is closer to~\cite{explainable-AI}, unfortunately the authors have fixed a threshold for 0 FPR without reporting their TPR.
Nagrani, Chung and Zisserman's voice authenticator~\cite{nagrani2017voxceleb} reports an EER of 0.078 on a neural network. Our classifiers achieve EERs of 0.03, 0.02, 0.04 and 0.12, which are within range of this benchmark. Our gait authenticator is the exception, it has not been evaluated for authentication with it's mixture of activity types. However, a review of gait authentication schemes can be found at~\cite{sprager2015inertial}.


\end{remark}

\subsection{Acceptance Region: Feature Vector API}



In this section, we evaluate the acceptance region (AR) by comparing it against FPR for all 16 authentication configurations (four datasets and four classifiers). In particular, we display ROC curves showing the trade-off between FPR and FRR against the acceptance region (AR) curve as the model thresholds are varied. These results are averaged over all users. While this gives an average picture of the disparity between AR and FPR, it does not highlight that for some users AR may be substantially higher than FPR, and vice versa. In such a case, the average AR might be misleading. Thus, we also show scattered plots showing per-user AR and FPR, where the FPR is evaluated at EER. The per-user results have been averaged over 50 repetitions to remove any bias resulting from the sampled/generated vectors. The individual user AR versus FPR scatter plots are shown in Figure~\ref{fig:all_ind}, and the (average) AR curves against the ROC curves are shown in Figure~\ref{fig:all_roc}. 

%

\begin{remark}
EER is computed in a best effort manner, with only 100 discretized threshold values, to mitigate the storage demands of the 1M uniformly random vectors measuring AR. Unfortunately, there are some instances whereby the FRR and FPR do not match exactly, as the threshold step induces a large change in both FRR and FPR. Only 1/16 classifiers exhibit an FPR-FRR discrepancy greater than 1\%.
\end{remark}


\begin{figure}[t]
    \centering
  \subfloat[Gait\label{fig:gait_ind}]{%
       \includegraphics[width=0.46\linewidth]{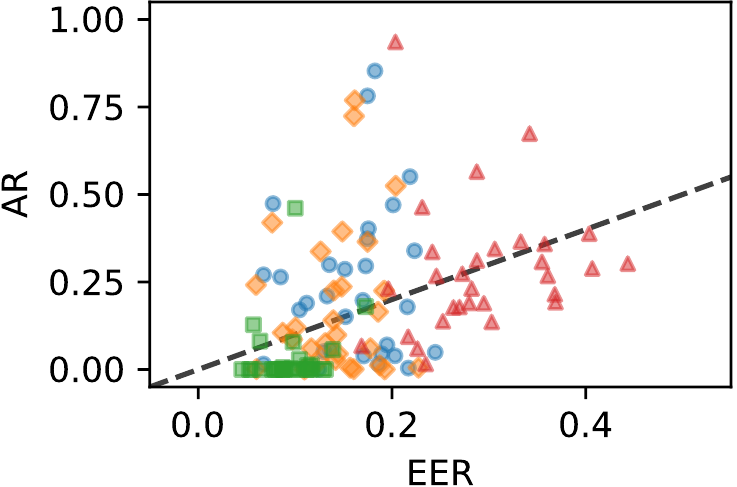}}
    \hfill
  \subfloat[Touch\label{fig:touch_ind}]{%
        \includegraphics[width=0.46\linewidth]{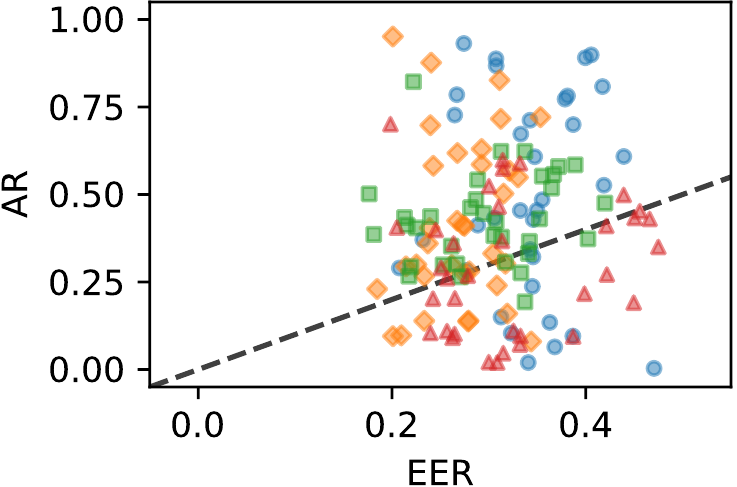}}
    \vspace{-0.3cm}
    \\
  \subfloat[Face\label{fig:face_ind}]{%
        \includegraphics[width=0.46\linewidth]{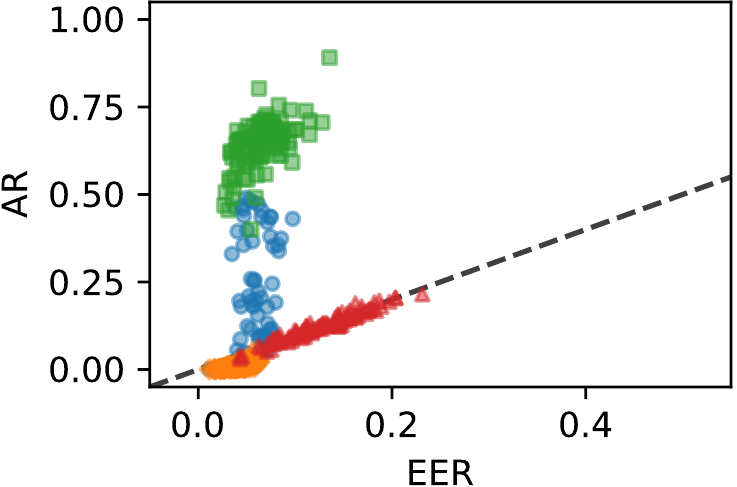}}
    \hfill
  \subfloat[Voice\label{fig:voice_ind}]{%
        \includegraphics[width=0.46\linewidth]{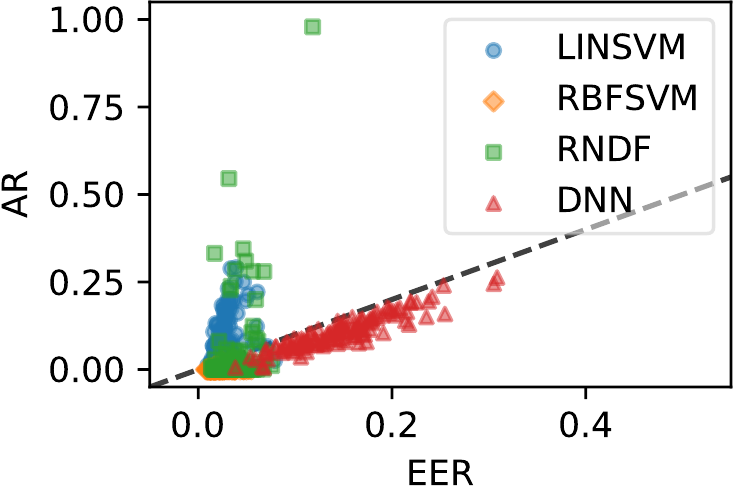}}
  \vspace{-0.2cm}
  \\
  \caption{Individual user scatter of AR and FPR. In a majority of configurations, there is no clear relationship between AR and FPR, with the exception of the RBFSVM and DNN classifiers for face and voice authentication. 
  \vspace{-1mm}
  }
  
  \label{fig:all_ind} 
\end{figure}

\begin{figure*}
  \centering
  \subfloat[Gait Average ROC\label{fig:gait_roc}]{%
       \includegraphics[width=1.0\linewidth]{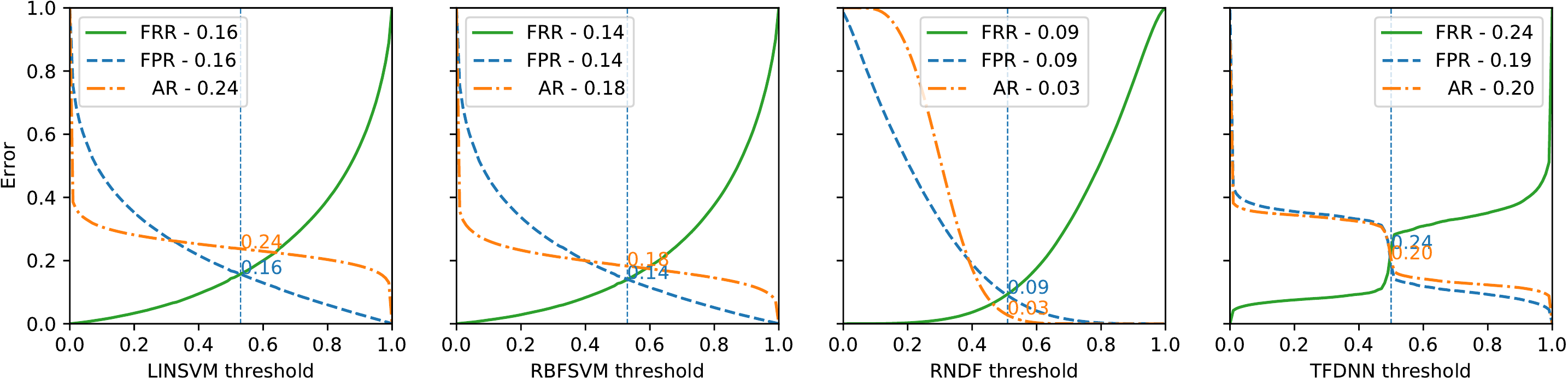}}\\
  \subfloat[Touch Average ROC\label{fig:touch_roc}]{%
        \includegraphics[width=1.0\linewidth]{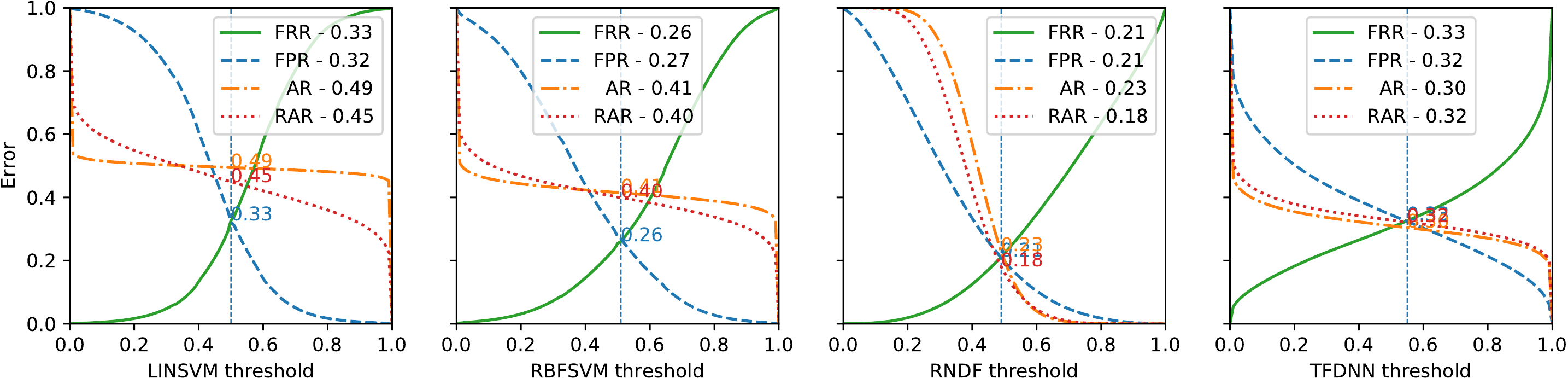}}\\
  \subfloat[Face Average ROC\label{fig:face_roc}]{%
        \includegraphics[width=1.0\linewidth]{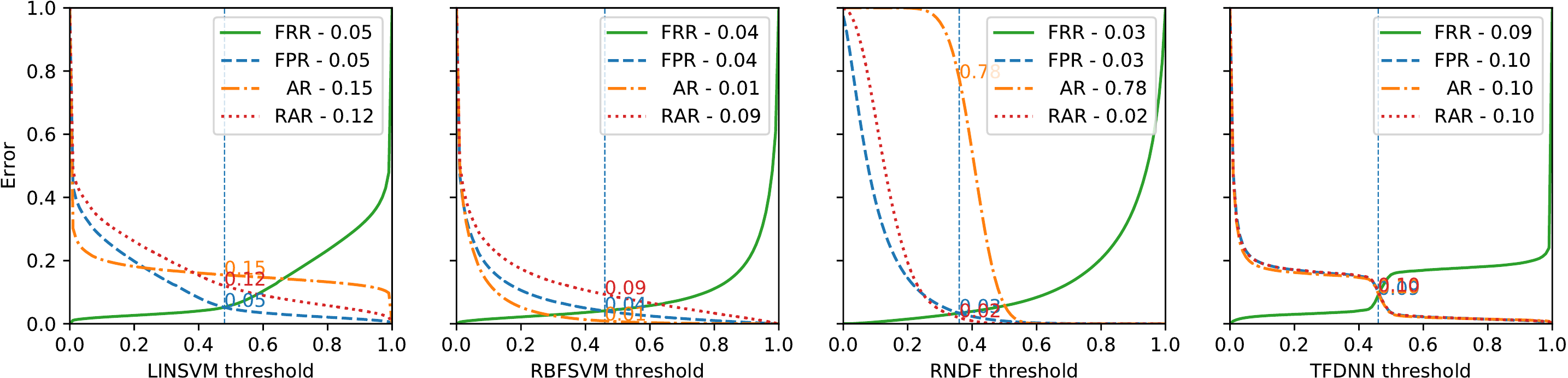}}\\
  \subfloat[Voice Average ROC\label{fig:voice_roc}]{%
        \includegraphics[width=1.0\linewidth]{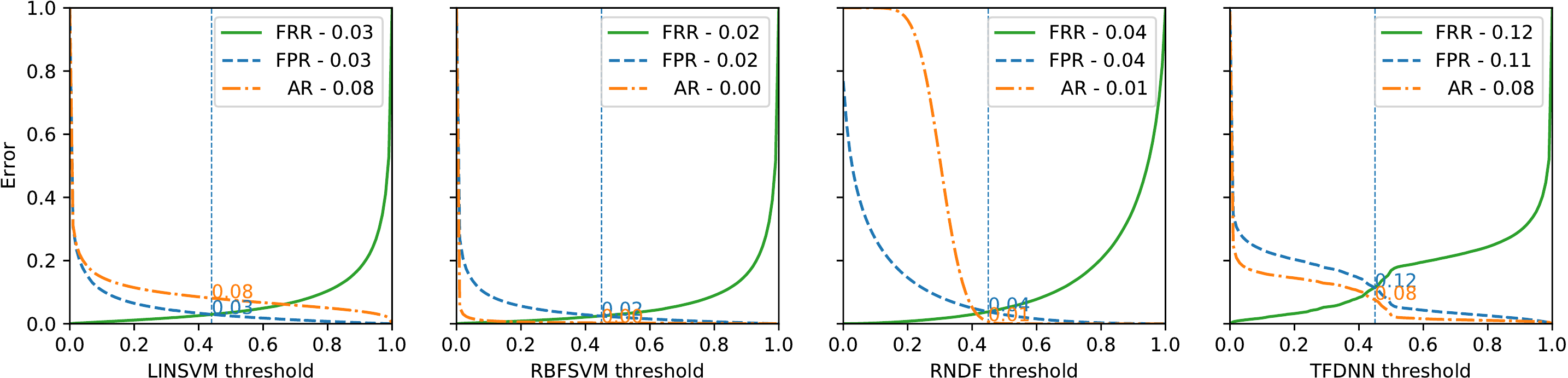}}
  \caption{ROC curve versus the AR and RAR curves for all configurations. The EER is shown as a dotted vertical blue line. The FRR, FPR, AR and RAR values shown in the legend are evaluated at EER (FPR = FRR). The RAR is only evaluated on the Touch and Face datasets.} 
  \label{fig:all_roc} 
\end{figure*}

\subsubsection{Gait Authentication}
\label{sec:classify_gait}
%
%
%
Figure~\ref{fig:gait_ind} shows AR against FPR of every user in the activity type (gait) dataset. Recall that in this figure FPR is evaluated at EER. The dotted straight line is the line where AR equals FPR (or ERR). We note that there is a significant proportion of users for which AR is greater than FPR, even when the latter is reasonably low. For instance, in some cases AR is close to 1.0 when the FPR is around 0.2. Thus, a random input attack on systems trained for these target users will be successful at a rate significantly higher than what is suggested by FPR. We also note that the two SVM classifiers have higher instances of users for whom AR surpasses FPR. Figure~\ref{fig:gait_roc} shows the AR curve averaged across all users against the FPR and FRR curves for all four classifiers. We can see that AR is higher than the ERR (represented by the dotted vertical line) for the two SVM classifiers. For the remaining two classifiers, AR is lower than EER. However, by looking at the AR curve for RNDF, we see that the AR curve is well above the FPR curve when $\text{FRR} \le 0.3$. This can be specially problematic if the threshold is set so as to minimize false rejection at the expense of false positives. We also note that the AR curve for DNN closely follows the FPR curve, which may suggest that the AR is not as problematic for this classifier. However, by looking at Figure~\ref{fig:gait_ind}, we see that this is misleading since for some users the AR is significantly higher than FPR, making them vulnerable to random input attacks. Also, note that the AR generally decreases as the threshold is changed at the expense of FRR. However, except for RNDF, the AR for the other three classifiers is significantly higher than zero even for FRR values close to 1.

\subsubsection{Touch (Swipe) Authentication}
\label{sec:classify_touch}

The touch authenticator has the highest EER of all four biometric modalities. Very few users attained an EER lower than 0.2 as seen in Figure~\ref{fig:touch_ind}. This is mainly because we consider the setting where the classification decision is being made after each input sample. Previous work has shown EER to improve if the decision is made on an average vector of a few samples some work~\cite{unobservable, touchalytics, chauhan2017behaviocog}. Nevertheless, since our focus is on AR, we stick to the per-sample decision setting. Figure~\ref{fig:touch_ind} shows that more than half of the users have ARs larger than FPR, and in some cases where the FPR is fairly low (say 0.2), the AR is higher than 0.5. Unlike gait authentication where RNDF classifier had ARs less than FPR for the majority of the users, all four algorithms for touch authentication display high vulnerability to the AR based random input attack. When viewing average results in Figure~\ref{fig:touch_roc}, we observe the average AR curve to be very `flat' for both SVM classifiers and DNN. This indicates that AR for these classifier remains mostly unchanged even if the threshold is moved closer to the extremes. RNDF once again is the exception, with the AR curve approaching 0 as the threshold is increased.




\subsubsection{Face Authentication}
\label{sec:classify_face}
Figure~\ref{fig:face_ind} shows that AR is either lower or comparable to FPR for RBFSVM and DNN. Thus, the FPR serves as a good measure of AR in these systems. However, AR for most users is significantly higher than FPR for LinSVM and RNDF. This is true even though the EER of these systems is comparable to the other two as seen in Figure~\ref{fig:face_roc}. For LINSVM, we have an average AR of 0.15 compared to an EER of 0.05. For RNDF, the situation is worse with the AR reaching 0.78 against an EER of 0.03. We also note that while the AR is equal to FPR for DNN, its value of 0.10 is still worrisome to be resistant to random input attack. The relatively high FPR for DNN as compared to RBFSVM is likely due to a limited set of training data available in training the neural network.

\subsubsection{Voice Authentication}
\label{sec:classify_speech}
Figure~\ref{fig:voice_ind} shows that once again LinSVM and RNDF have a significant proportion of users with AR higher than FPR, whereas for both RBFSVM and DNN the AR of users is comparable to FPR. Looking at the average ARs in Figure~\ref{fig:voice_roc}, we see that interestingly RNDF exhibits an average AR of 0.01 well below the ERR of 0.04. The average suppresses the fact that there is one user in the system with an AR close to 1.0 even with an EER of approximately 0.1, and two other users with an AR of 0.5 and 0.3 for which the EER is significantly below 0.1. Thus these specific users are more susceptible to the random input attack. Only LinSVM has an average AR (0.08) higher than EER (0.03). The average AR of DNN is lower than EER (0.11), but it is still significantly high (0.08). For RBFSVM we have an average AR close to 0.  



\subsection*{Observations}
In almost every configuration, we can observe that the average AR is either higher than the FPR or at best comparable to it. Furthermore, for some users the AR is higher than FPR even though the average over all users may not reflect this trend. This demonstrates that an attacker with no prior knowledge of the system can launch an attack against it via the feature vector API. Moreover, for both the linear and radial SVM kernels, and some instances of the DNN classifier, we observe a relatively flat AR curve as the threshold is varied, unlike the gradual convergence to 1 experienced by the FPR and FRR curves. 
These classifiers thus have a substantial acceptance region that accept samples as positives irrespective of the threshold. Random Forests is the only classifier where the AR curve shows significant drop as the threshold is varied. 
Random forests sub-divide the training dataset in a process called bagging, where each sub-division is used to train one tree within the forest. With different subsets of data, different training data points will be closer to different empty regions in feature space, thus producing varied predictions. Because the prediction confidence of RNDF is computed from the proportion of trees agreeing with a prediction, the lack of consensus within the ensemble of trees for the empty space may be the reason for the non-flat AR curve.


\subsection{Acceptance Rate: Raw Input API}
\label{sub:rar-results}

The results from the feature vector API are not necessarily reflective of the success rate of a random input attack via the raw input API. One reason for this is that the feature vectors extracted from raw inputs may or may not span the entire feature space, and as a consequence the entire acceptance region. For this reason, we use the term \emph{raw acceptance rate} (RAR) to evaluate the probability of successfully finding an accepting sample via raw random inputs. To evaluate RAR, we select the touch and face biometric datasets. The raw input of the touch authenticator is a time-series, whereas for the face authentication system it is an image. 




\subsubsection{Raw Touch Inputs}
We used a continuous auto-regressive process (CAR)~\cite{timesynth-source}
to generate random timeseries. We opted for CAR due to the extremely high likelihood of time-series values having a dependence on previous values. This time-series was then min-max scaled to approximate sensor bounds. For example the \texttt{x}-position has a maximum and minimum value of 1980 and 0 respectively, as dictated by the number of pixels on a smartphone screen. Both the duration and  length of the time-series were randomly sampled from reasonable bounds: 0.5 to 2.0 seconds and 30 to 200 data-points, respectively. The time-series was subsequently parsed by the same feature extraction process as a legitimate time-series, and the outputs scaled on a feature min-max scale previously fit on real user data. In total we generate 100,000 time-series\blue{, which are used to measure RAR over 50 repetitions of the experiment.}

The results of our experiments are shown in  Figure~\ref{fig:touch_roc}, with the curve labeled RAR showing the raw acceptance rate as the threshold of each of the classifiers is changed. As we can see, the RAR is large and comparable to AR. This seems to indicate that the region spanned by random inputs covers the acceptance region. However, on closer examination, this happens to be false. The average volume covered by the true positive region for the touch dataset (cf. Section~\ref{sec:attack_def}) is less than $1.289 \times 10^{-4} \pm 5.462 \times 10^{-4}$, yet the volume occupied by the feature vectors extracted from raw inputs is less than $2.609 \times 10^{-6}$. This is significantly smaller than the AR for all four classifiers. We will return to this observation shortly.



\subsubsection{Raw Face Inputs}

We generated 100,000 images of size 160x160 pixels, with uniformly sampled RGB values. Feature embeddings were then extracted from the generated images with the pre-trained Facenet model (cf. Section~\ref{subsub:face-dataset}). This set of 100,000 raw input vectors, was parsed by a min-max scaler fitted to real user data. We did not align the noisy images, as there is no facial information within the image to align. Note that alignment is normally used in face authentication to detect facial boundaries within an image. \blue{Again, we aggregate results over 50 repetitions to remove any potential biases.}

The results from these raw inputs are shown in Figure~\ref{fig:face_roc}. We note that the RAR curve behaves much more similarly to the FPR curve, than what was previously observed for raw touch inputs. Also, in the particular example of RBFSVM, we obtain an RAR of 0.09 which is significantly higher than the AR (0.01) at the equal error rate. We again computed the true positive region and found that the average is $6.562 \times 10^{-94} \pm 6.521 \times 10^{-93}$. However, the volume covered by the raw inputs (after feature extraction) is only $4.670 \times 10^{-390}$, which is negligible compared to the ARs (0.15, 0.01, 0.78 and 0.10 for all four classifiers). Additional analysis shows that only one other user's feature space overlapped with the space of raw inputs, with an overlapped area of $8.317 \times 10^{-407}$, many orders of magnitude smaller than both the positive users and the raw feature space itself. \

\subsection*{Observations} 
The threat of a random input attack via raw random inputs is also high, and in some cases greater than the FPR. However, the region spanned by the feature vectors from these raw inputs is exponentially small and hence does not span the acceptance region. Furthermore, the region also does not coincide with any true positive region. This implies that raw inputs may result in high raw acceptance rate due to the fact that the training data does not have representative vectors in the region spanned by raw inputs. We shall return to this observation when we discuss mitigation strategies in Section~\ref{sec:mitigation}.


%
\section{Synthetic Dataset}
\label{sec:synthetic}

\begin{figure*}[th!]
	\centering
    \includegraphics[width=1.0\textwidth]{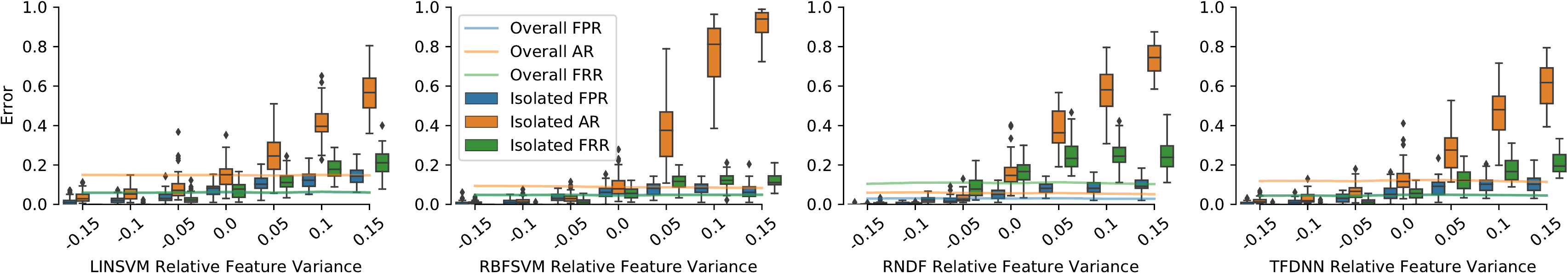}
    \vspace{-0.65cm}
  	\caption{A comparison between FPR, AR, four different ML architectures. Trained on synthetic data of 50 features of 50 user, of increasing variance within features for a singular user, repeated 50 times. Note how the system level AR and FPR remains unchanging, despite the isolated user's AR increasing substantially.}
 	\label{fig:isolate_userv}
 	\vspace{-0.10cm}
\end{figure*}

\begin{figure*}[th!]
	\centering
    \includegraphics[width=1.0\textwidth]{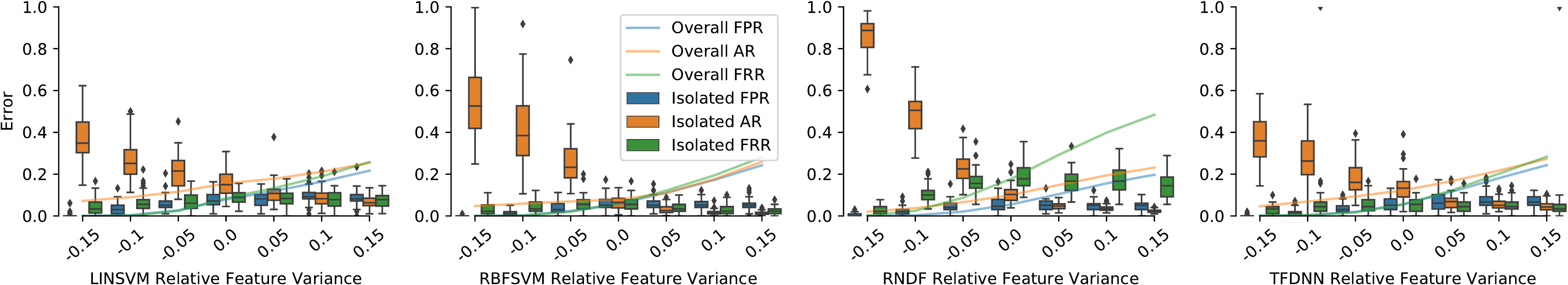}
    \vspace{-0.65cm}
  	\caption{A comparison between FPR, AR, four different ML architectures. Trained on synthetic data of 50 features of 50 user, of increasing variance within features of all other users except a singular user, repeated 50 times. The x-axis denotes the relative SD of the population compared with the isolated user.}
 	\label{fig:isolate_otherv}
 	\vspace{-0.05cm}
\end{figure*}

The analysis in the previous section was limited in the sense that we could not isolate the reasons behind the discrepancy between AR and FPR. Indeed, we saw that for some configurations (dataset-classifier pairs), the AR curve nicely followed the FPR curve, e.g., the face dataset and DNN (Figure~\ref{fig:face_ind}), where as for others this was not the case. In order to better understand the factors effecting AR, in this section we attempt to empirically verify the hypothesized factors effecting the acceptance region outlined in Section~\ref{sec:attack_def}. Namely, high feature variance in a target user's samples is likely to increase AR, and low feature variance in the user samples in the negative class is expected to result in high AR. In both these cases, we expect to achieve a reasonably low EER, but AR may still be significantly greater than FPR. Moreover, if these factors are indeed true, we expect to see similar behavior across all classifiers. To test this we create a synthetic model of a biometric dataset. 



\subsection{Simulating a Biometric Dataset}
Let $\mathcal{N}(\mu, \sigma^2)$, denote the normal distribution with mean $\mu$ and standard deviation $\sigma$. We assume each feature to be normally distributed across all users with slight variations in mean and standard deviation across all features and users. More specifically, our methodology for generating the synthetic dataset is as follows.
\begin{enumerate}[leftmargin=5mm]
    \item We model the mean of all $n$ features taking values in the unit interval $\mathbb{I}$ as a normally distributed random variable $\mathcal{N}(\mu_{\text{mn}}, \sigma^2_{\text{mn}}) = \mathcal{N}(0.5, 0.1^2)$. Similarly we model the standard deviation of all $n$ features as another normally distributed random variable $\mathcal{N}(\mu_{\text{var}}, \sigma^2_{\text{var}}) = \mathcal{N}(0.1, 0.07^2)$. 
    \item For each feature $i \in [n]$, we first sample $\mu_i \leftarrow \mathcal{N}(\mu_{\text{mn}}, \sigma^2_{\text{mn}})$ and $\sigma_i \leftarrow \mathcal{N}(\mu_{\text{var}}, \sigma^2_{\text{var}})$. The resulting normal distribution $\mathcal{N}(\mu_{i}, \sigma^2_{i}) $ serves as the population distribution of the mean of the feature $i$.
    \item For each user $u$, we sample the mean $\mu_{u, i} \leftarrow \mathcal{N}(\mu_{i}, \sigma^2_{i})$. The variance $\sigma^2_{u, i}$ is chosen as the control variable. User $u$'s samples for the $i$th feature are generated as i.i.d. random variables $\mathcal{N}(\mu_{u, i}, \sigma^2_{u, i})$, which serves as user $u$'s distribution for the $i$th feature. 
\end{enumerate}

We evaluate the same four types of ML architectures, LinSVM, RBFSVM, RNDF and DNN. Due to the large number of potential configurations we evaluate the model performance at a fixed threshold of 0.5. For the experiments we choose 50 (synthetic) users, with 50 features in the feature space. Each experimental run is repeated 50 times \blue{to reduce any potential biases arising from the random process.}

\blue{
\begin{figure*}[th!]
	\centering
    \includegraphics[width=1.0\textwidth]{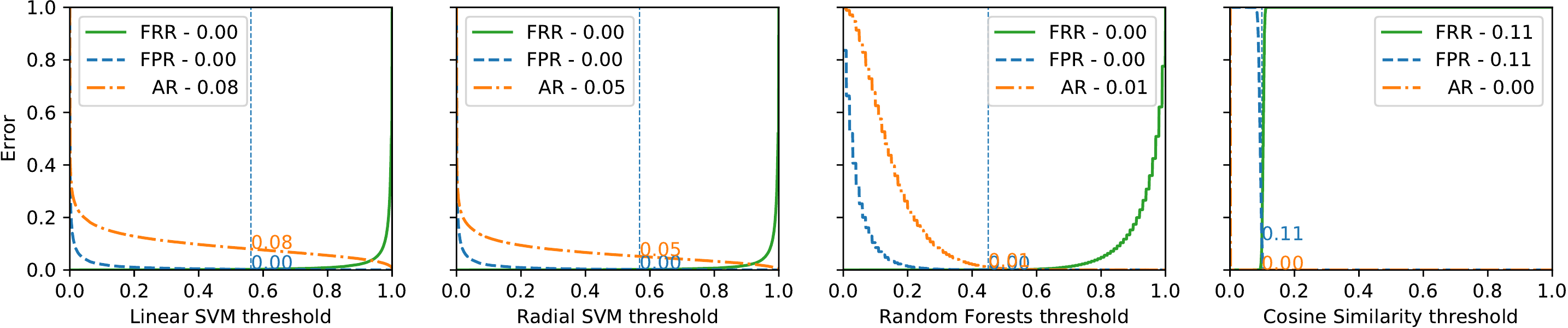}
    \vspace{-0.7cm}
  	\caption{\blue{ROC Curves versus the AR curve for different ML architectures, including a cosine similarity distance-based classifier. Trained on synthetic data of 50 features of 50 user, with fixed mean and variance for features of all users, repeated 50 times.}}
 	\label{fig:cosine_results}
 	\vspace{-0.1cm}
\end{figure*}

\begin{figure*}[th!]
	\centering
    \includegraphics[width=1.0\textwidth]{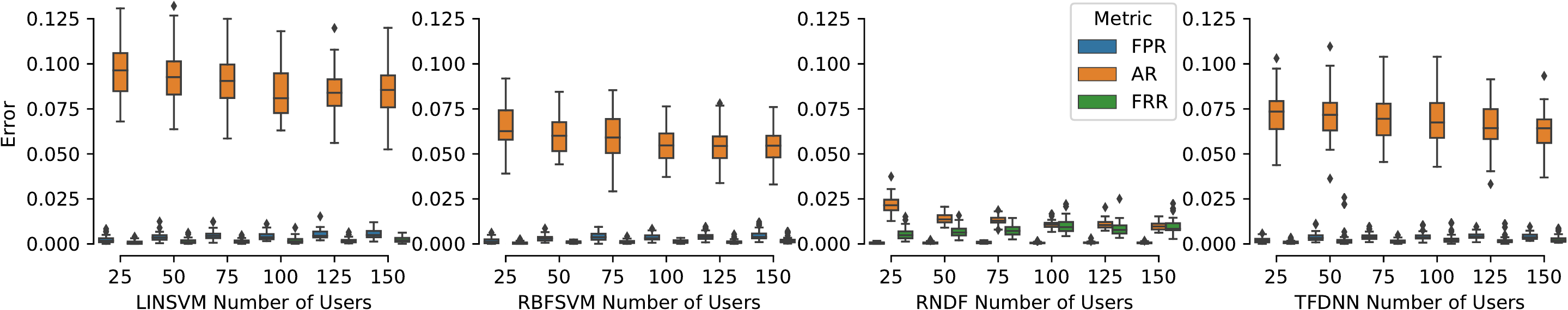}
    \vspace{-0.6cm}
  	\caption{\blue{A comparison between FPR and AR of four different ML architectures. Trained on synthetic data of 50 features per user, with a variable number of users, repeated 50 times.}}
 	\label{fig:vary_users}
 	\vspace{-0.2cm}
\end{figure*}
}

\subsection{Effects of Feature Variance on Acceptance Region}
\label{sub:eff-var}

\subsubsection{Variable Isolated User Variance and Fixed Population Variance}

We first treat one out of the 50 users as an outlier, whcih we call the \emph{isolated} user. 
The variance $\sigma^2_{u, i}$ is fixed at $(0.2)^2$ for all other users $u$ and for all features $i \in [n]$. We vary the variance $\sigma_{u_\text{tgt}, i}$ of the isolated user $u_\text{tgt}$ from 0.05 to 0.35 in increments of 0.05. Figure~\ref{fig:isolate_userv} plots the user's standard deviation $(\sigma_{u_\text{tgt}, i})$ relative to the fixed population standard deviation $(\sigma_{u, i})$ of 0.2. 
It is clear the overall AR, FRR and FPR of the users is not affected by changing feature variance of a single user, despite the isolated user's samples included as part of training and testing data of other users. Conversely, when viewing the AR, FRR and FPR of the isolated user, we observe a slight increase in FRR and FPR as the relative variance increases. This is due to the positive samples being spread out due to increased variance in the isolated user's samples. However, this is accompanied by a substantially large increase in the acceptance region of this user, approaching 1, i.e., the entire feature space. Furthermore, this trend is visible for all four classifiers.

\subsubsection{Fixed Isolated User Variance and Variable Population Variance}


In this experiment, we fix the variance $\sigma^2_{u_\text{tgt}, i}$ of the isolated user ($u_\text{tgt}$) at $(0.2)^2$. The $\sigma^2_{u, i}$ of the remaining population is sampled from a normal distribution $\sigma_{u, i} \leftarrow \mathcal{N}(\mu_{i}, \sigma^2_{i})$.
Where $\mu_{i}$ and $\sigma_{i}$ is sampled from the following distributions
$\mathcal{N}(\mu_{\text{mn}}, \sigma^2_{\text{mn}}) = \mathcal{N}(\mu_{\text{mn}}, 0.05^2)$ and $\mathcal{N}(\mu_{\text{var}}, \sigma^2_{\text{var}}) = \mathcal{N}(0.03, 0.02^2)$, respectively. $\mu_{\text{mn}}$ is varied between 0.05 and 0.35 in increments on 0.05
%
This sampling permits a small amount of variation between features.

The results are shown in Figure \ref{fig:isolate_otherv}. Inspecting the average AR, FRR and FPR of the system, it is evident there is a continual increase of all 3 metrics as the relative variance increases. This increase is expected as the majority of users' feature values have high variance, presenting an increasingly difficult problem for the machine learner to reduce misclassification errors. However, in all four classifiers the average AR curve is either comparable or lower than the FPR curve as the relative variance increases. For the isolated user, we see that when the relative variance of all other users is lower than this user (to the left), the AR is significantly higher even though the FPR and FRR are minimal in all four classifiers. This shows that less variance in the population samples will result in a high AR, as the classifier need not tighten AR around the true positive region, due to lack of high variance negative samples. On the other hand, AR of the isolated user decreases as the relative variance of the population increases.

\blue{
\subsection{On Distance Based Classifiers}
\label{sec:distance_classifier}
}

\blue{
As noted earlier, it has been stated that random inputs are ineffective against distance-based classification algorithms~\cite{pagnin2014leakage}. This is in contrast to the machine learning based algorithms evaluated in this paper. 
We take a brief interlude to experimentally evaluate this claim on the cosine similarity distance-based classifier. 
We sample 50 features with means distributed as 
$\mathcal{N}(\mu_{\text{mn}}, \sigma^2_{\text{mn}}) = \mathcal{N}(0.2, 0.05^2)$ and variance distributed as $\mathcal{N}(\mu_{\text{var}}, \sigma^2_{\text{var}}) = \mathcal{N}(0.03, 0.02^2)$. Cosine similarity is computed between two vectors of the same length. As our positive training data contains more than one training sample, we use the average of these samples as the representative template of the user~\cite{chauhan2016gesture}. We use a fixed number of 50 users, with the experiment repeated 50 times. Recall that our evaluation at each threshold is best-effort; we use 1,000 threshold bins for the evaluation of the cosine similarity classifier, since the FRR and FPR rapidly change over a small range of thresholds.

Figure~\ref{fig:cosine_results} displays three classical machine learning algorithms of linear SVM, radial SVM, and random forests, alongside a distance-based cosine similarity classifier. 
It is clear from the figure, that the AR is near zero for cosine similarity, unlike the other classifiers using the same synthetic dataset. This, however, comes at the cost of higher EER. This suggests that distance-based classifiers are effective in minimizing the AR of model, but at the expense of accuracy of the system. We leave further investigation of distance-based classifiers as future work.

\subsection{Effects of Increasing Synthetic Users}
\label{sec:vary_users}
The real-world datasets used in Section~\ref{sec:realscheme} have a variable number of users. Our binary classification task aggregates negative user samples into a negative class, resulting in distributions and variances of the negative class which depend on the number of users in the datasets. Thus, in this test we investigate the impact on TPR, FPR and AR by varying the number of users in the dataset. We use the synthetic dataset configured in the same manner as in Section~\ref{sec:distance_classifier}. We increase the number of users within the synthetic dataset, from 25 to 150, in increments of 25. Note that the split between positive and negative samples is still balanced (see Remark~\ref{rem:balanced}).

In Figure \ref{fig:vary_users}, we observe that with the addition of more users, there is a slight increase in the FPR. This is expected as the likelihood of user features being similar between any two users will increase with more users in the population. As the training of the classifier uses samples from other users as a negative class, the increased number of negative users slightly lowers the AR of the classifier, with an increased variation of the negative training set (from additional users) covering more of the feature space. However, both these changes are relatively minor despite the multi-fold increase in the number of users. Thus, the AR of the classifiers remains relatively stable with an increasing number of users.

}




\begin{table*}[t]
\caption{Equal Error Rate and AR
with and without the mitigation strategy. Green (resp., red) shades highlight improvement (resp., deterioration) in FPR and AR. Color intensity is proportional to degree of performance change.}
\label{tab:summary_ar}
\vspace{-0.1cm}
\centering
\resizebox{0.80\textwidth}{!}{%
\begin{tabular}{|l||l|l|l|l||l|l|l|l||l|l|l|l||l|l|l|l|}
\hline
 & \multicolumn{4}{c||}{Linear SVM} & \multicolumn{4}{c||}{Radial Svm} & \multicolumn{4}{c||}{Random Forest} & \multicolumn{4}{c|}{Deep Neural Network} \\ \hline
Biometric & \multicolumn{2}{c|}{Normal} & \multicolumn{2}{c||}{Mitigation} & \multicolumn{2}{c|}{Normal} & \multicolumn{2}{c||}{Mitigation} & \multicolumn{2}{c|}{Normal} & \multicolumn{2}{c||}{Mitigation} & \multicolumn{2}{c|}{Normal} & \multicolumn{2}{c|}{Mitigation} \\ \cline{2-17}
Modality & FPR & AR & FPR & AR & FPR & AR & FPR & AR & FPR & AR & FPR & AR & FPR & AR & FPR & AR \\ \hline
Gait & 0.160 & 0.24 & \cellcolor{red!0} 0.160 & \cellcolor{green!25} 0.04 & 0.140 & 0.18 & \cellcolor{red!0} 0.140 & \cellcolor{green!17.5} 0.04 & 0.09 & 0.03 & \cellcolor{red!0} 0.09 & \cellcolor{green!3.75} 0.00 & 0.215 & 0.20 & \cellcolor{green!5.625} 0.170 & \cellcolor{green!25} 0.00 \\
Touch & 0.325 & 0.49 & \cellcolor{red!1.875} 0.340 & \cellcolor{green!60} 0.01 & 0.265 & 0.41 & \cellcolor{red!0} 0.265 & \cellcolor{green!47.5} 0.03 & 0.21 & 0.23 & \cellcolor{red!0} 0.21 & \cellcolor{green!28.75} 0.00 & 0.325 & 0.30 & \cellcolor{red!6.25} 0.375 & \cellcolor{green!37.5} 0.00 \\
Face & 0.050 & 0.15 & \cellcolor{red!1.875} 0.065 & \cellcolor{green!5} 0.11 & 0.040 & 0.01 & \cellcolor{red!0} 0.040 & \cellcolor{red!0} 0.01 & 0.03 & 0.78 & \cellcolor{red!0} 0.03 & \cellcolor{green!97.5} 0.00 & 0.095 & 0.10 & \cellcolor{green!3.75} 0.065 & \cellcolor{green!7.5} 0.04 \\
Voice & 0.030 & 0.08 & \cellcolor{red!0} 0.030 & \cellcolor{green!2.5} 0.06 & 0.020 & 0.00 & \cellcolor{red!0} 0.020 & \cellcolor{red!0} 0.00 & 0.04 & 0.01 & \cellcolor{red!0} 0.04 & \cellcolor{green!1.25} 0.00 & 0.115 & 0.08 & \cellcolor{green!3.125} 0.090 & \cellcolor{green!7.5} 0.02\\ \hline
\end{tabular}
}
\end{table*}

\begin{table*}[t]
\caption{Equal Error Rate and RAR
with and without the mitigation strategy.
The AR values remain the same as in Table~\ref{tab:summary_ar}.
$\beta$-RAR 
indicates RAR treated with only $\beta$ noise. RAR 
indicates the inclusion of both $\beta$ noise and raw random input samples.}
\label{tab:summary_rar}
\vspace{-0.1cm}
\resizebox{\textwidth}{!}{%
\begin{tabular}{|l||l|l|l|l|l||l|l|l|l|l||l|l|l|l|l||l|l|l|l|l|}
\hline
 & \multicolumn{5}{c||}{Linear SVM} & \multicolumn{5}{c||}{Radial Svm} & \multicolumn{5}{c||}{Random Forest} & \multicolumn{5}{c|}{Deep Neural Network} \\ \hline
Biometric & \multicolumn{2}{c|}{Normal} & \multicolumn{3}{c||}{Mitigation} & \multicolumn{2}{c|}{Normal} & \multicolumn{3}{c||}{Mitigation} & \multicolumn{2}{c|}{Normal} & \multicolumn{3}{c||}{Mitigation} & \multicolumn{2}{c|}{Normal} & \multicolumn{3}{c|}{Mitigation} \\ \cline{2-21}
Modality & FPR & RAR & FPR & $\beta$-RAR & RAR & FPR & RAR & FPR & $\beta$-RAR & RAR &  FPR & RAR & FPR & $\beta$-RAR & RAR &  FPR & RAR & FPR & $\beta$-RAR & RAR \\ \hline
Touch Raw & 0.325 & 0.45 & \cellcolor{red!2.5} 0.345 & \cellcolor{green!1.25} 0.44 & \cellcolor{green!56.25} 0.00 & 0.265 & 0.40 & \cellcolor{red!0} 0.265 & \cellcolor{green!5} 0.36 & \cellcolor{green!48.75} 0.01 & 0.21 & 0.18 & \cellcolor{red!0.625000000000001} 0.215 & \cellcolor{green!16.25} 0.05 & \cellcolor{green!22.5} 0.00 & 0.325 & 0.32 & \cellcolor{red!6.875} 0.38 & \cellcolor{green!7.5} 0.26 & \cellcolor{green!40} 0.00 \\
Face Raw & 0.050 & 0.12 & \cellcolor{red!3.125} 0.075 & \cellcolor{red!2.5} 0.14 & \cellcolor{green!15} 0.00 & 0.040 & 0.09 & \cellcolor{red!0} 0.040 & \cellcolor{red!0} 0.09 & \cellcolor{green!11.25} 0.00 & 0.03 & 0.02 & \cellcolor{red!0} 0.030 & \cellcolor{green!1.25} 0.01 & \cellcolor{green!2.5} 0.00 & 0.095 & 0.10 & \cellcolor{green!3.125} 0.07 & \cellcolor{green!5} 0.06 & \cellcolor{green!8.75} 0.03\\ \hline
\end{tabular}
}
\vspace{-0.2cm}
\end{table*}

\section{Mitigation}
\label{sec:mitigation}
In the previous section, we validated that higher variance in the samples in the negative class as compared to the variance of samples from the target user class reduces AR. The data from the negative class is obtained from real user samples, and therefore scheme designers cannot control the variance. However, this gives us a simple idea to minimize AR: generate noise vectors around the target user's vectors and treat it as part of the negative class for training the model. This will result in the tightening of the acceptance region around the true positive region. We remark that the noise generated is independent of the negative training samples.





\subsection{The Beta Distribution}
More specifically, we generate additional negative training samples by sampling noisy vectors where each feature value is sampled from a beta distribution. We generate samples equal to the number of samples in the positive class. Thus creating a dataset with a third of the samples as positive, another third as negative samples from other users, and finally the remaining third of feature vectors treated as negative samples from the beta distribution dependent on the positive user. The procedure is as follows. For the $i$th feature, let $\mu_i$ denote the mean value for the given target user. We use the beta distribution with parameters $\alpha_i = | 0.5 - \mu_i| + 0.5$ and $\beta_i = 0.5$. We denote the resulting beta distribution by $\mathcal{B}e(\alpha_i, \beta_i)$. Then a noisy sample $\mathbf{x}$ is constructed by sampling its $i$th element $x_i$ from the distribution $ \mathcal{B}e(\alpha_i, \beta_i)$ if $\mu_i \le 0.5$, and from $1 - \mathcal{B}e(\alpha_i, \beta_i)$ otherwise. The two cases ensure that we add symmetric noise as the mean moves over to either side of $0.5$.




\descr{Results on AR.} In Table \ref{tab:summary_ar}, we show the resulting FPR and AR after the addition of beta noise at the equal error rate. The detailed ROC curves are shown in Figure~\ref{fig:all_beta_def} in Appendix~\ref{sec:mitigation_roc}. In every configuration (classifier-dataset pairs), we see a significant decrease in AR. The AR is now lower than FPR in every configuration. In 14 out of 16 cases, the AR is $\le 0.04$. The two exceptions are LinSVM (with face and voice datasets). We further see that in 13 out of 16 instances the FPR either remains unchanged or improves! The 3 instances where the FPR degrades are LinSVM with face and face datasets both by $+0.015$, and DNN with Touch where the difference is $+0.05$. Thus, adding beta distributed noise does indeed decrease the AR with minimal impact on FPR. This agrees with our postulate that high AR was likely due to loose decision boundaries drawn by the classifier, and the addition of beta noise tightens this around the true positive region. Figure \ref{fig:individual_ar-fpr_def} in Appendix~\ref{sec:mitigation_roc} displays individual user FPRs and ARs.

\descr{Results on RAR.} 
Interestingly, beta distributed noise only marginally reduces the raw acceptance rate as can be seen in Table~\ref{tab:summary_rar} (columns labeled $\beta$-RAR). 
The reason for this lies in the volume of the region spanned by random raw inputs. We previously saw in Section~\ref{sub:rar-results} that it was (a) exponentially small and (b) many orders of magnitude smaller than the true positive region. Thus, it is unlikely that beta distributed noise will lie in this region to aid the model to label them as negative samples. Consequently we sought another means to mitigate this attack surface. 

\subsection{Feature Vectors from Raw Inputs as Negative Samples}
Our mitigation strategy to reduce RAR is to include a subset of raw input vectors in the training process, whose cardinality is equal to the number of positive user samples in the training dataset. The training dataset now contains 1/4th each of raw input vectors, beta-noise, positive samples, and samples from other users.   


\descr{Results on AR and RAR.}
Table~\ref{tab:summary_rar} shows that the mitigation strategy reduces the RAR to less than or equal to 0.03 in all instances (columns labeled RAR). The resulting FPR is marginally higher than the FPR from only beta-distributed noise in some cases (Table~\ref{tab:summary_ar}). Thus, the inclusion of beta-distributed noise in conjunction with subset of raw inputs in the training data reduces both AR and RAR with minimal impact on FPR and FRR.

\section{Discussion}
\label{sec:discussion}
\begin{itemize}[leftmargin=5mm]
\item Our work proposes an additional criterion to assess the security of biometric systems, namely their resilience to random inputs. The work has implications for biometric template protection~\cite{biometric-template}, where a target template resides on a remote server and the attacker's goal is to steal the template. In such a setting, obtaining an accepting sample may be enough for an attacker, as it serves as an approximation to the biometric template. Our work shows that the attacker might be able to find an approximation to the template via random input attacks if the system AR is not tested. Conversely, once the AR is reduced below FPR (e.g., via adding beta distributed noise), then one can safely use FPR as the baseline probability of success of finding an approximation. 

\item We have assumed that the input to the classifier, in particular the length of the input is publicly known. In practice, this may not be the case. For instance, in face recognition, a captured image would be of a set size unknown to the attacker. Likewise, the number of features in the (latent) feature space may also be unknown. However, we do not consider this as a serious limitation, as the input length is rarely considered sensitive so as to be kept secret. In any case, the security of the system should not be reliant on keeping this information secret following Kerckhoffs's well known principle. 

\item We note that there are various detection mechanisms that protect the front-end of biometric systems. For example, spoofing detection \cite{marcel2014handbook} is an active area in detecting speaker style transfer \cite{wang2017tacotron}. Detection of replay attacks is also leveraged to ensure the raw captured biometric is not reused, for example audio recordings \cite{kinnunen2017asvspoof}. There is also liveliness detection, which seeks to determine if the biometric that is presented is characteristic of a real person and not a recreation, e.g., face masks remain relatively static and unmoving compared to a real face \cite{tan2010face}. Our attack surface applies once the front-end has been bypassed. Our mitigation measures can thus be used in conjunction with these detection mechanisms to thwart random input attacks. \blue{Being generic, our mitigation measures also work for systems which do not have defense measures similar to liveness detection.}






\item Once an accepting sample via the feature vector API has been found, it may be possible to obtain an input that results in this sample (after feature extraction), as demonstrated by 
Garcia et al. with the training of an auto-encoder for both feature extraction and the regeneration of the input image~\cite{garcia2018explainable}. 

\item \blue{In this work, we have focused on authentication as a binary classification problem, largely because of its widespread use in biometric authentication~\cite{chauhan2017behaviocog, curran2017one, huang2018breathlive, liu2018vocal, chen2017your, song2016eyeveri, chauhan2017breathprint, ho2017mini, crawford2017authentication, xu-soups}. However, authentication has also been framed as a one-class classification problem~\cite{one-class-face, xu-soups} or as multi-class classification~\cite{xu-soups}, e.g., in a discrimination model, as noted earlier. In one-class classification, only samples from the target user are used to create the template, and the goal is to detect outliers. If this is achieved in a manner similar to distance-based classifiers, then as we have seen in Section~\ref{sec:distance_classifier}, and as previously indicated in~\cite{pagnin2014leakage}, the AR is expected to be small. In the multi-class setting, each of the $n$ users is treated as a different class. This increase in classes is expected to proportionally lower the AR. However, whether this behavior is observed on real world data requires additional experimentation. We remark that as observed in Section~\ref{sub:eff-var}, AR is highly dependent on the relative variance of the positive user and the negative user features. This may lead to the possibility of larger AR for some of the users, consequently leading to higher risk of attack for these users. We leave thorough investigation of the one-class and multi-class settings as future work.}


\end{itemize}


\section{Related Work}
\label{sec:related}
There are several mentions of attacks similar to the random input attack discussed in this paper. Pagnin et al.~\cite{pagnin2014leakage} define a \emph{blind brute-force attack} on biometric systems where the attacker submits random inputs to find an accepting sample. The inputs are $n$-element vectors whose elements are integers in the set $\{0, 1, \ldots, q - 1\}$. The authors conclude that the probability of success of this attack is exponential in $n$, assuming that the authentication is done via a distance function (discarding any vector outside the ball of radius determined by the system threshold). They concluded that blind brute force attack is not effective in recovering an accepting sample. While this may apply to distance-based matching, the same conclusion cannot be made about machine learning based algorithms whose decision functions are more involved. Indeed, we have shown that the acceptance region for machine learning classifiers is not exponentially small. It has also been argued that the success rate of random input attacks can be determined by the false positive rate (FPR), at least in the case of fingerprint and face authentication~\cite{martinez-hill, vulnerable-face}. We have shown that for sophisticated machine learning classifiers this conclusion is not true, and random input attacks in many instances success at a rate higher than FPR. A more involved method is  hill-climbing~\cite{soutar-hill, martinez-hill} which seeks an accepting sample via exploiting the confidence scores returned by the matching algorithm. The authentication systems considered in this paper do not return confidence scores.

Serwadda and Phoha~\cite{serwadda2013kids} use a robotic finger and population statistics of touch behavior on smartphones to launch a physical attack on touch-based biometric authentication systems. Their attack reduces the accuracy of the system by increasing the EER. In contrast, our work does not assume any knowledge of population biometric statistics, e.g., population distribution of feature space. It is an interesting area of work to investigate whether a robotic finger can be programmed to generate raw inputs used in our attack. 


Garcia et al.~\cite{garcia2018explainable} use explainable-AI techniques~\cite{explainable-AI} to construct queries (feature vectors) to find an accepting sample in machine learning based biometric authentication systems. On a system with 0 FPR, they show that their attack is successful in breaching the system with up to 93\% success rate. However, their attack is more involved: it requires the construction of a seed dataset containing representative accepting and rejecting samples of a user set chosen by the adversary. 
This dataset trains a neural network as a substitute to the classifier of the authentication system.
The adversary then uses explainable AI techniques to obtain an accepting sample of a target user (not in the seed dataset) in as few queries as possible, by updating the substitute network. 
The authors also report a random feature vector attack, however, the attack is only successful on one out of 16 victims. The random feature vector is constructed by sampling each feature value via a normal distribution (distribution parameters not stated), unlike the uniform distribution in our case. We also note that they propose including images with randomly perturbed pixels as a counter-measure to defend against the aforementioned random input attack. This is different from our proposed beta-distributed noise mitigation technique, as it is agnostic to the underlying biometric modality.

The frog-boiling attack~\cite{chan2011frog, huang2011adversarial} studies the impact of gradual variations in training data samples to manipulate the classifier decision boundary. 
In this work we do not consider the adversary with access to the training process, nor do we evaluate models with an iterative update process. If this threat model is considered for the problem addressed in this paper, then an adversary may seek to maximize the acceptance region of a model by gradually poisoning the training dataset. As we have demonstrated in Section~\ref{sec:synthetic}, the relative variance between the user's data and population dataset directly impacts AR. Thus the manipulation of a user's training samples to be more varied would be effective in increasing the AR. Likewise, in our mitigation technique, we have shown that beta-distributed noise is effective in the minimization of AR. However an adversary might poison the training data by labeling beta noise as positive samples resulting in a maximization of the acceptance region to near 100\% of the feature space. 

Our work is different from another line of work that targets machine learning models in general. For instance, the work in~\cite{papernot-practical} shows an \blue{\emph{evasion attack} where the adversary, through only blackbox access to a neural network, forces the classifier to misclassify an input by slightly perturbing the input even though the perturbed sample is perceptually similar to the original sample, e.g., noisy images.}
The attack can be applicable to the authentication setting as well. However, it relies on the confidence values (probability vectors) returned by the classifier, which is not the case in authentication. Similarly, the work in~\cite{tramer2016stealing} shows how to steal a machine learning model, i.e., retrieve its undisclosed parameters, which only returns class labels (accept/reject decision in the case of authentication). They describe several techniques including the Lowd and Meek attack~\cite{lowd-meek} to retrieve a model sufficiently similar to the target model. The machine learning models considered in their attack are for applications different from authentication where one expects to find an accepting sample with negligible probability. 



\blue{There are also proposals to defend against the above mentioned evasion attacks. The goal is to make the classifiers \emph{robust} against adversarial inputs in the sense that classification is constant within a ball of certain radius around each input~\cite{madry2017towards, cohen2019certified}. Madry et al.~\cite{madry2017towards} propose a theoretical framework which formalizes defense against adversarial attacks by including adversarially perturbed samples in the loss function of DNNs. They show that it is possible to train DNNs robust against a wide range of adversarial input attacks. 
Cao and Gong~\cite{cao2017mitigating} propose another defense where given a test input, random points within a hypercube surrounding the input are sampled, and the majority label returned by the already trained DNN is assigned to the test input. 
Randomized smoothing~\cite{cohen2019certified} creates a separate classifier from any classifier such that its prediction within a Gaussian noise region (ball) around any input is constant, and consequently less likely to produce an erroneous prediction. We note that in evasion attacks there is a notion of \emph{nearness}, i.e., the adversary is given an input and seeks to add a small amount of noise such that the resultant erroneously labelled input is close to the original input. In contrast, in our case the random input need not be close to the target user's samples or even follow the same distribution. Furthermore, we have shown that even a conservative estimate of the true positive region is negligible in comparison to the entirety of the feature space (Section~\ref{sec:attack_area}). Thus, it is unclear whether such defenses apply to uniform random inputs, as opposed to random perturbations of inputs.}



Membership inference attacks \cite{shokri2017membership, salem2018ml} attempt to determine if a record obtained by an adversary was part of the original training data of the model. Whilst this attack does not compromise the security of the model, it breaches the privacy of the individual records. These attacks create a \emph{shadow model}~\cite{shokri2017membership} to mimic the behavior of the target model. Salem et al.~\cite{salem2018ml} construct a shadow model using only positive class samples and negative noise generated via uniformly random feature vectors. However it is hypothesized that these random samples belong to non-members, i.e., the negative class~\cite[\S V.B]{salem2018ml}. We have shown that a large portion of these random inputs may \blue{also belong to the positive class.} 


\blue{Finally, we point to other works in literature analyzing the security of biometric authentication systems. Sugrim et al.~\cite{sugrim2019robust} survey and evaluate a range of performance metrics used in biometric authentication schemes. They seek to motivate scheme designers to leverage robust metrics to provide a complete description of the system, including a proposal of the new metric: Frequency Count Score (FCS). The FCS metric shows a distribution of scores of legitimate and unauthorized users, identifying the overlap between the two distributions which helps to select the appropriate threshold for the classification decision. The FCS, however, is dependent on the negative class or samples of other users, which does not include random inputs. The work in~\cite{sugrim2019recruit} investigates the accuracy of authentication systems reported on a small number of participants when evaluated over an increasing number of users. The authors suggest that performance limits of a system with a small number of participants should be evaluated iteratively by increasing the participant count until a the performance degrades below a tolerable limit.  
}

\section{Conclusion}
It is important to assess the security of biometric authentication systems against random input attacks akin to the security of passwords against random guess attacks. We have demonstrated that without intentionally including random inputs as part of the training process of the underlying machine learning algorithm, the authentication system is likely to be susceptible to random input attacks at a rate higher than indicated by EER. Absent any other detection mechanism, e.g., liveliness detection, this renders the system vulnerable. The mitigation measures proposed in this paper can be adopted to defend against such attacks.

\section{Acknowledgments}
This research was funded by the Optus Macquarie University Cybersecurity Hub, Data61 CSIRO and an Australian Government Research Training Program (RTP) Scholarship. We would like to thank the anonymous reviewers and our shepherd Kevin Butler for their feedback to improve the paper. 



\begin{figure*}[!ht]
  \centering
  \subfloat[Touch Average ROC in the presence of Beta Noise\label{fig:touch_betarand_def}]{%
        \includegraphics[width=0.87\linewidth]{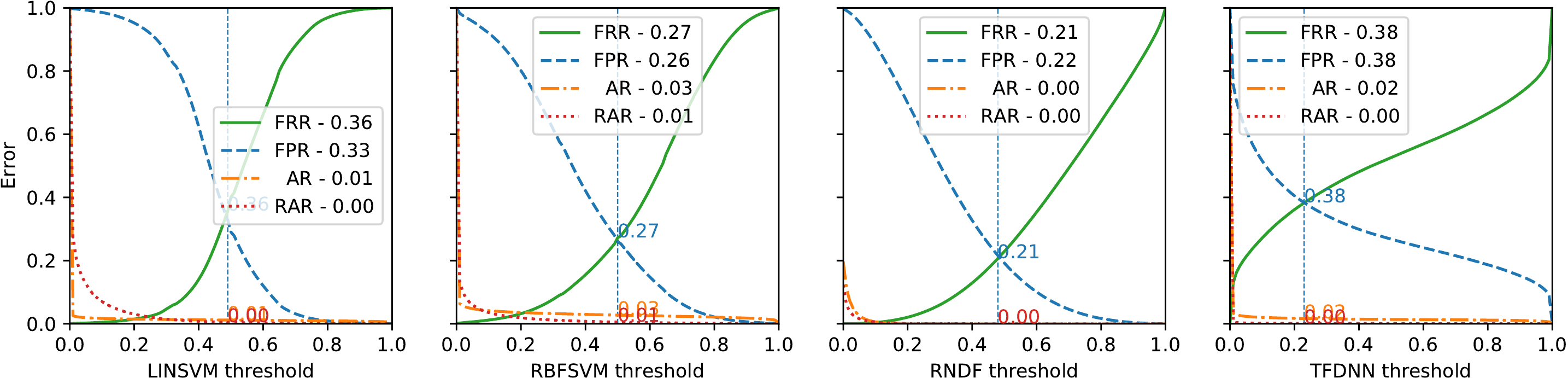}}\\[-0.2pt]
  \subfloat[Face Average ROC in the presence of Beta Noise\label{fig:face_betarand_def}]{%
        \includegraphics[width=0.87\linewidth]{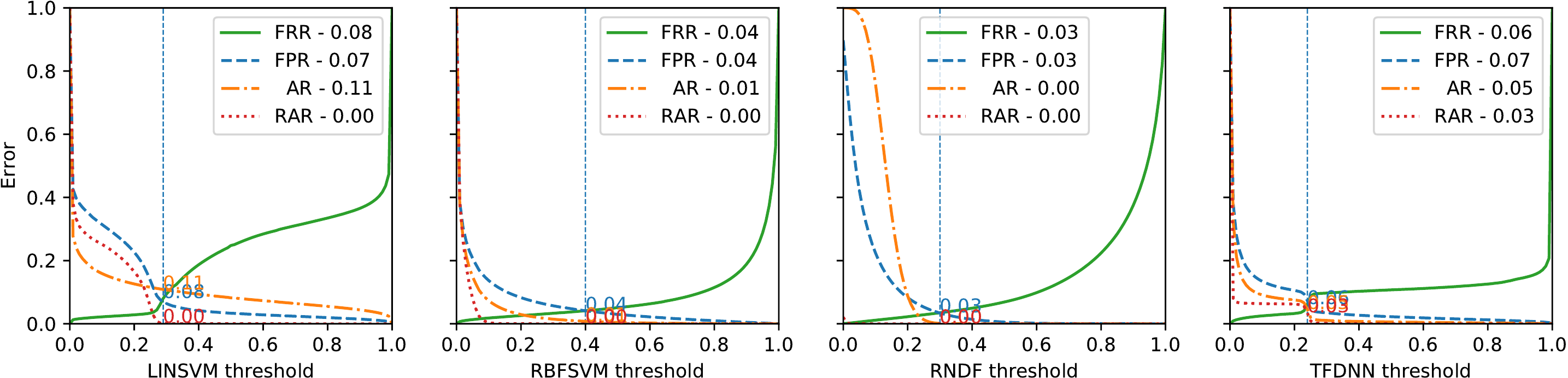}}\\
  \vspace{-0.2cm}
  \caption{Beta-noise mitigation of AR, with additional negative samples from the RAR feature set. The EER is marked on the diagrams as a vertical line. Addition RAR vectors were included as it was previously observed that beta noise is sufficient in mitigating AR attacks, but not the RAR attack.}
  \label{fig:all_betarand_def} 
  \vspace{-0.2cm}
\end{figure*}

\bibliographystyle{IEEEtran}
\bibliography{ref.bib}

\appendices


\vspace{-0.2cm}
\section{Mitigation ROC Plots}\label{sec:mitigation_roc}

This appendix contains plots of the results as discussed in Section \ref{sec:mitigation}.
Figure~\ref{fig:individual_ar-fpr_def} contains per-user scatter plots of AR and FPR for all biometric modalities and algorithms. For the same classifiers, Figure~\ref{fig:all_beta_def} illustrates the ROC curves for classifiers trained with the inclusion of beta distributed noise only. Finally, Figure~\ref{fig:all_betarand_def} displays the ROC curves for all classifiers of touch and face datasets with the inclusion of both beta distributed noise and raw input vectors as an additional mitigation strategy against the the raw inputs, which were unfazed by the beta noise. A summary of changes in FRR, FPR, AR and RAR of both Figure~\ref{fig:all_beta_def} and \ref{fig:all_betarand_def} have been provided earlier in Table~\ref{tab:summary_ar} and \ref{tab:summary_rar} of Section~\ref{sec:mitigation}.

\begin{figure*}[htbp]
    \centering
  \subfloat[Gait\label{fig:gait_ind_def}]{%
       \includegraphics[width=0.23\linewidth]{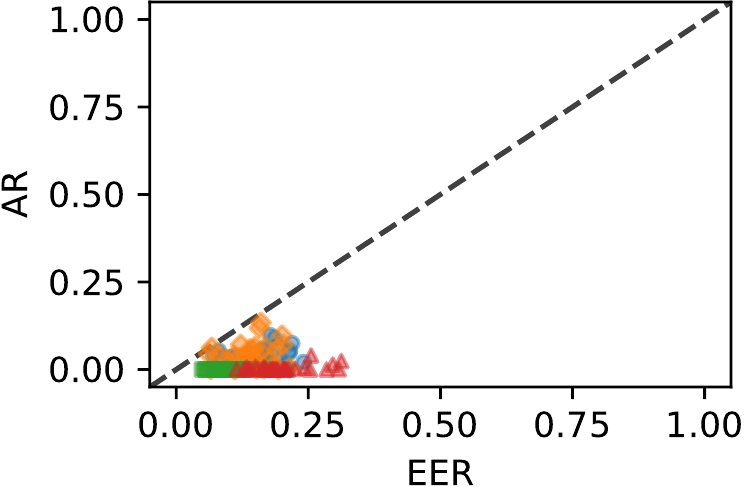}}
    \hspace*{0.7\fill}
  \subfloat[Touch\label{fig:touch_ind_def}]{%
        \includegraphics[width=0.23\linewidth]{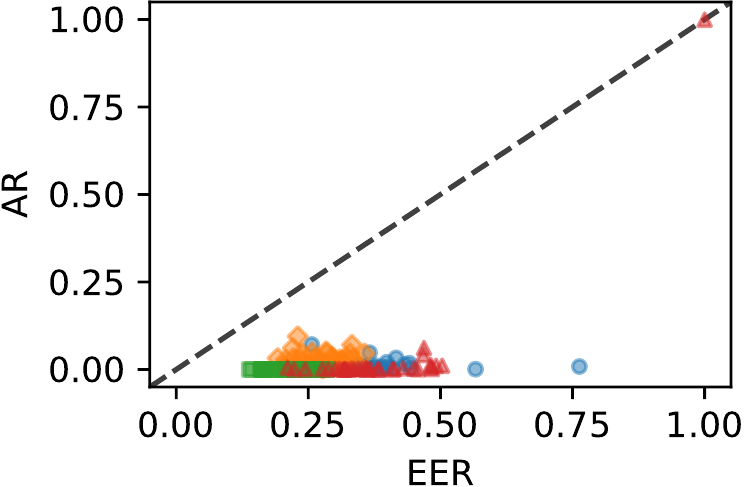}}
    \hspace*{0.7\fill}
  \subfloat[Face\label{fig:face_ind_def}]{%
        \includegraphics[width=0.23\linewidth]{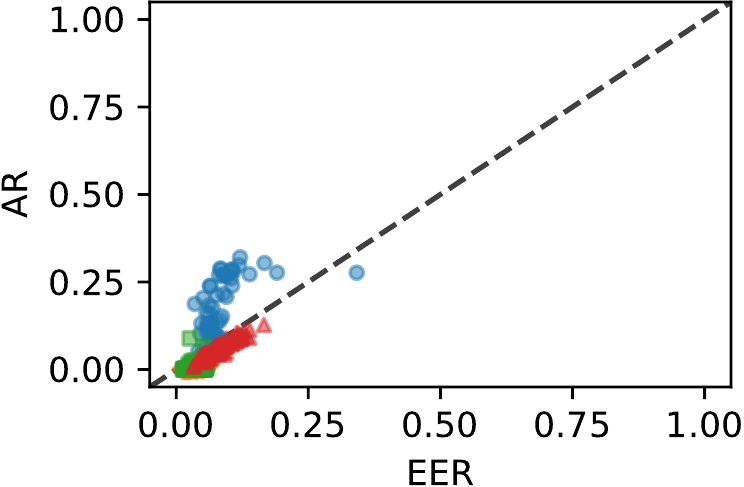}}
    \hspace*{0.7\fill}
  \subfloat[Voice\label{fig:voice_ind_def}]{%
        \includegraphics[width=0.23\linewidth]{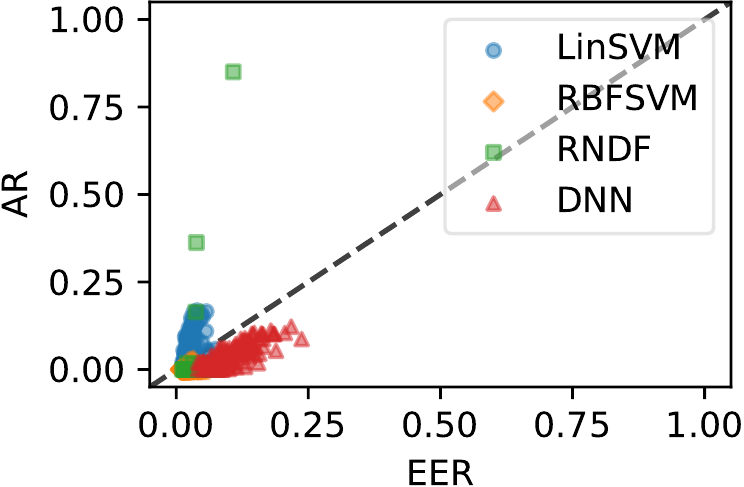}}
    \vspace{-0.2cm}
    \\
  \caption{Individual user scatter of AR and FPR after the addition of beta distributed noise. A substantial proportion of users now exhibit an AR close to zero, or below the AR = FPR. Unfortunately, this defense mechanism did not completely minimize the AR of LINSVM for the Face authenticator. Nor did this defense protect two outlying users in the RNDF voice authenticator.}
  \label{fig:individual_ar-fpr_def} 
  \vspace{-0.2cm}
\end{figure*}

\vfill

\begin{figure*}[htbp]
  \centering
  \subfloat[Gait Average ROC in the presence of Beta Noise\label{fig:gait_beta_def}]{%
        \includegraphics[width=0.88\linewidth]{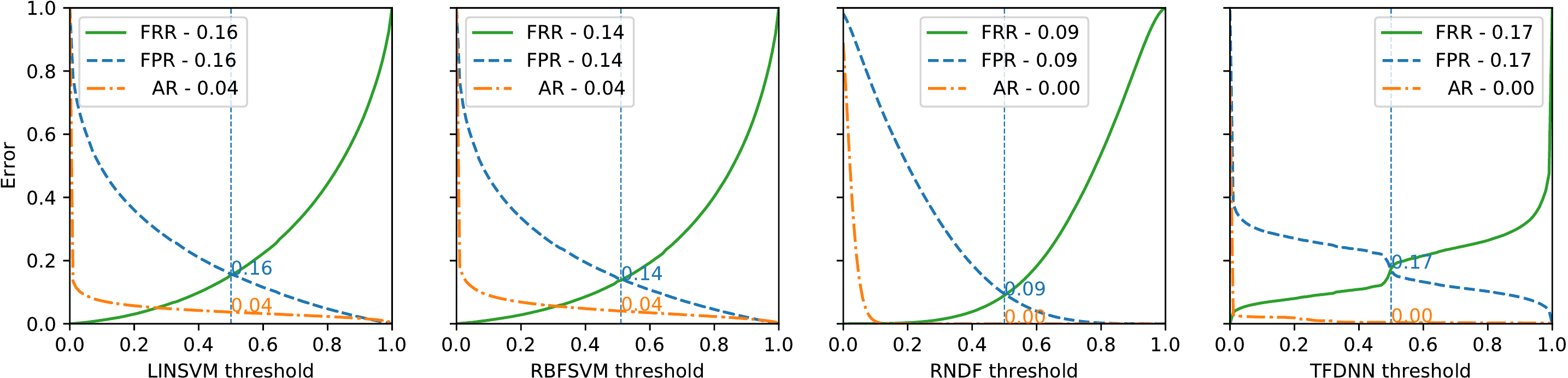}}\\
  \subfloat[Touch Average ROC in the presence of Beta Noise\label{fig:touch_beta_def}]{%
        \includegraphics[width=0.88\linewidth]{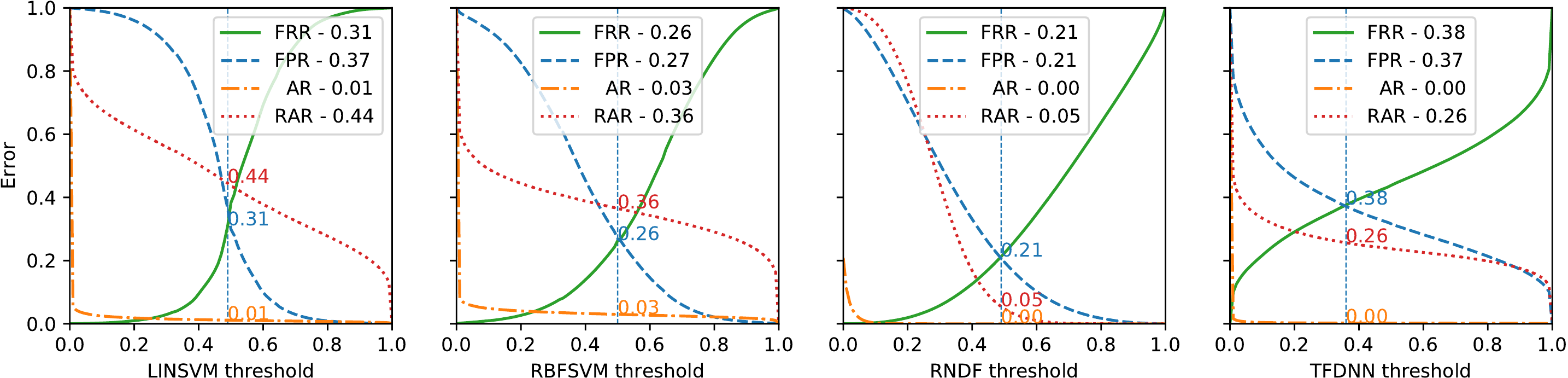}}\\
  \subfloat[Face Average ROC in the presence of Beta Noise\label{fig:face_beta_def}]{%
        \includegraphics[width=0.88\linewidth]{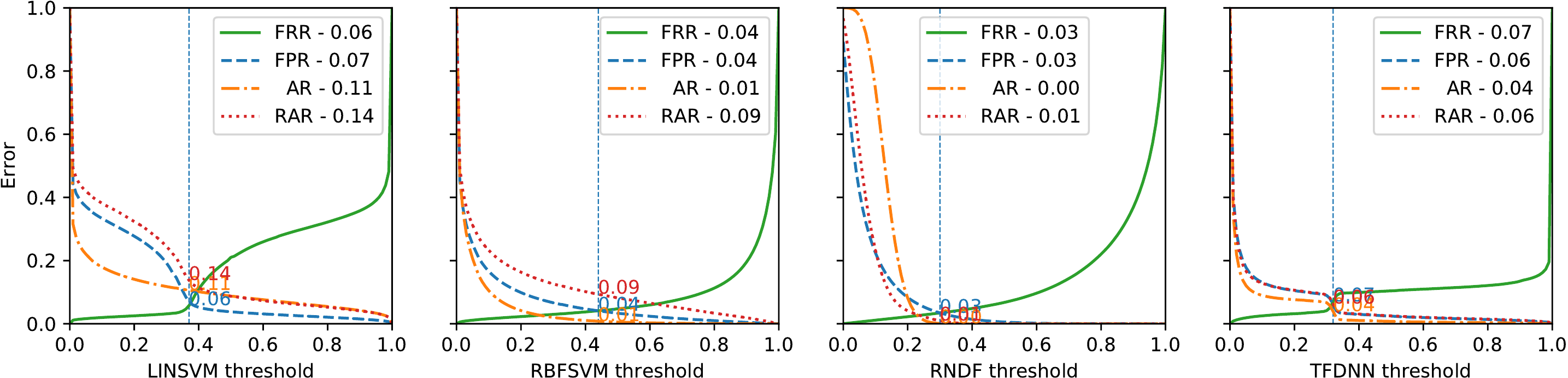}}\\
  \subfloat[Voice Average ROC in the presence of Beta Noise\label{fig:voice_beta_def}]{%
        \includegraphics[width=0.88\linewidth]{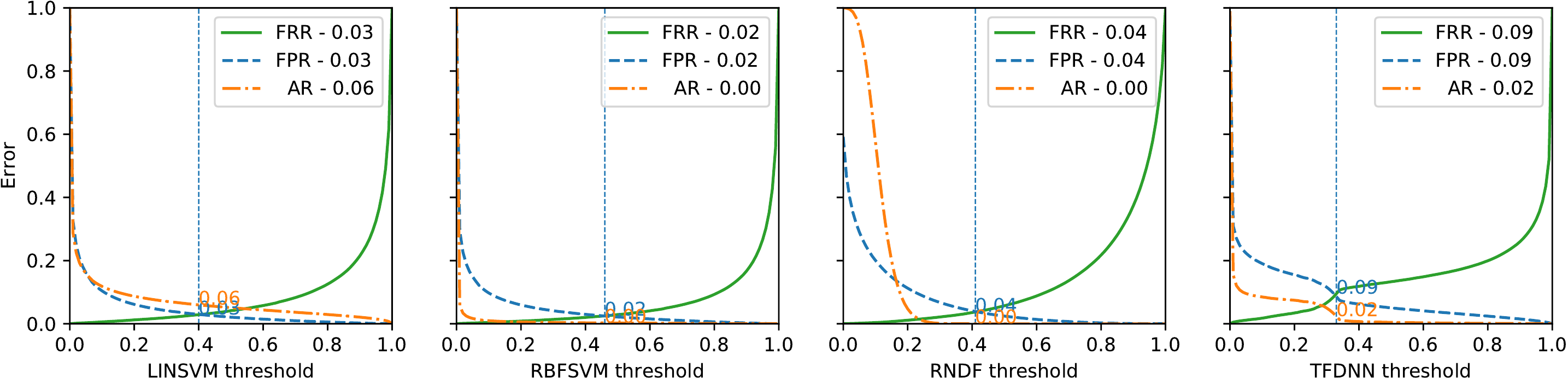}}
  \caption{Beta-noise mitigation of AR, with additive negative training noise sampled from a symmetric beta distribution around the mean of the user's features. The EER is marked on the diagrams as a vertical line. It is noted the plots with RAR curves the additional Beta-noise is not sufficient in mitigating RAR attacks.}
  \label{fig:all_beta_def} 
\end{figure*}

\section{DNN Estimator configuration.} \label{sec:dnn_config}

All models were trained for 5000 steps, with batch size of 50, with the Adagrad optimizer. \blue{The exact layer configuration of the \texttt{DNNEstimator}~\cite{tensorflow-estimators} used can be found on our project page (\url{https://imathatguy.github.io/Acceptance-Region/}).}



\end{document}